

 \documentclass[final,3p,times,authoryear]{elsarticle}



\usepackage{subfigure}
\usepackage{multirow}
\usepackage{algorithmic}
\usepackage{algorithm}

\usepackage{subfigure}
\usepackage{graphicx}
\usepackage{amsfonts}

\newtheorem{definition}{Definition}[section]
\newtheorem{theorem}{Theorem}[section]

\newtheorem{proof}{Proof}[section]
\newtheorem{lemma}{Lemma}[section]
\newtheorem{example}{Example}[section]

\newtheorem{stipulation}{Stipulation}[section]



\begin{document}

\begin{frontmatter}



\title{Efficient Semi-External Breadth-First Search}


\author{Xiaolong~Wan}
\ead{wxl@hit.edu.cn}

\author{Xixian~Han\corref{cor1}}
\cortext[cor1]{Corresponding author}
\ead{hanxx@hit.edu.cn}

\affiliation{organization={School of Computer Science and Technology},
	addressline={Harbin Institute of Technology}, 
	city={Harbin},
	postcode={150001}, 
	state={Heilongjiang},
	country={China}}

%

\begin{abstract}
  Breadth-first search (BFS) is known as a basic search strategy for learning graph properties. As the scales of graph databases have increased tremendously in recent years, large-scale graphs $G$ are often disk-resident. Obtaining the BFS results of $G$ in semi-external memory model is inevitable, because the in-memory BFS algorithm has to maintain the entire $G$ in the main memory, and external BFS algorithms consume high computational costs. As a good trade-off between the internal and external memory models, semi-external memory model assumes that the main memory can at least reside a spanning tree of $G$. Nevertheless, the semi-external BFS problem is still an open issue due to its difficulty. Therefore, this paper presents a comprehensive study for processing BFS in semi-external memory model. After discussing the naive solutions based on the basic framework of semi-external graph algorithms, this paper presents an efficient algorithm, named EP-BFS, with a small minimum memory space requirement, which is an important factor for evaluating semi-external algorithms. Extensive experiments are conducted on both real and synthetic large-scale graphs, where graph WDC-2014 contains over $1.7$ billion nodes, and graph eu-2015 has over $91$ billion edges. Experimental results confirm that EP-BFS can achieve up to 10 times faster.
\end{abstract}



\begin{keyword}
Breadth-first search \sep Semi-external memory model \sep Graph algorithm
\PACS 0000 \sep 1111
\MSC 0000 \sep 1111

\end{keyword}

\end{frontmatter}



\section{Introduction}\label{sec:introduction}

Breadth-first search~(BFS) is a well-known graph traversal problem~\citep{DBLP:books/daglib/0023376}, and has various applications on directed graphs, such as reachability query~\cite{DBLP:conf/sigmod/JinHWRX10}, creating web page indexes~\cite{DBLP:journals/jksucis/AttiaAK22}, and computing shortest path or minimum spanning tree on unweighted graphs~\cite{DBLP:conf/ics/FengWZLLL24,DBLP:books/daglib/0037819}. To be specific, if BFS traverses a node $u$ of a directed graph $G$, it marks $u$ as \textit{visited} and adds all the adjacent nodes of $u$ into a first-in-first-out queue $Q$. Then, it recursively traverses the first unmarked node $v$ in $Q$. Here, we take BFS on a social graph for example, as described in Example~\ref{example:G_0}.

\begin{example}
	\label{example:G_0}
	$G_0$ is a social graph shown in Figure~\ref{fig:introduction_figure}(a), where (\romannumeral1) each node represents a person, (\romannumeral2) each edge~(a solid or dotted directed line in Figure~\ref{fig:introduction_figure}(a)) connects two nodes, and (\romannumeral3) an edge $e$ from node $u$ to node $v$ denotes that $u$ follows $v$. To verify whether a person Amy~(node $a$ in $G_0$) directly or indirectly follows another person Taylor~(node $t$ in $G_0$), one may execute BFS on $G_0$ starting from node $a$, where the search order is given in Figure~\ref{fig:introduction_figure}(b). The edges that BFS passed in $G_0$ when searching from $a$ are drawn as solid lines in Figure~\ref{fig:introduction_figure}(a), while the other edges are drawn as dotted lines.
\end{example}

The BFS problem on a directed graph $G$ can be addressed efficiently, when $G$ can fully reside in the main memory. Assuming $G$ has $n$ nodes and $m$ edges, the in-memory BFS algorithm costs $O(n+m)$ time~\cite{DBLP:books/daglib/0023376}. Nevertheless, the sizes of graphs grow continuously and substantially, and many real graphs cannot be fully accommodated in the main memory~\cite{DBLP:conf/sigmod/ZhangYQCL13} and are disk-resident. For instance, as a small snapshot of the Internet, graph clueweb12 contains more than $978$ million nodes and $42$ billion edges\footnote{http://law.di.unimi.it/webdata/clueweb12/}; graph WDC-2014 includes over $1.7$ billion nodes and $64$ billion edges~\footnote{http://www.webdatacommons.org/}.

\begin{figure}[t]
	\centering
	\renewcommand{\thesubfigure}{}
	\subfigure[(a) $G_0$]{
		\includegraphics[scale = 0.5]{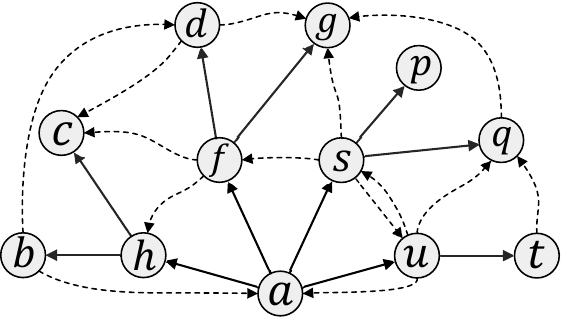}}\quad
	\subfigure[(b) Search order]{
		\includegraphics[scale = 0.5]{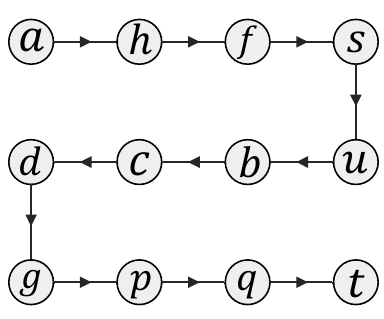}}
	
	\caption{A social graph $G_0$, and the search order of $G_0$ when BFS starts from node $a$.}
	\label{fig:introduction_figure}
\end{figure}

External memory model is proposed for processing disk-resident data, which assumes the main memory can accommodate at most $M$ elements, and $M$ can be really small. The rest, which cannot be resident in memory, is stored in external memory, and divided into different blocks. Each block contains $B$ consecutive elements. Data must be transferred to the CPU via memory to be computed. When the required data is stored only in external memory, it must be transferred from external memory to memory through an I/O operation. In an I/O operation, one disk block of data is transferred between the main memory and the disk. External BFS algorithms~\cite{DBLP:conf/soda/Meyer01,DBLP:conf/swat/ArgeT04,DBLP:conf/esa/MehlhornM02,DBLP:conf/spdp/KumarS96} aim to address the BFS problem when $M$ is very small. On general directed graphs, \textit{the best external BFS algorithm, named EM-BFS}, still relies on \textit{buffered repository tree} to guarantee its runnability when $M$ is very small~\cite{DBLP:conf/soda/BuchsbaumGVW00}. EM-BFS, due to  its dependence on the buffered repository tree, consumes high computational costs and cannot be afforded when processing large-scale graphs.


Semi-external memory model, the variant of external memory model, sets a lower bound for $M$, and $M\geq c\times n$, where $c$ is a small constant, e.g. $c=3,4,5,\dots$, instead of assuming $M$ can be arbitrarily small. It has attracted the attention of researchers for processing graph problems when the main memory cannot contain the entire $G$. For one thing, the semi-external memory model is known as a good trade-off between the internal memory model and external memory model~\cite{DBLP:conf/sigmod/ZhangYQCL13}. For another, processing graph problems in semi-external memory model is inevitable, since the commonly scale-out approaches will eventually hit the wall of technology and monetary~\cite{DBLP:conf/cidr/KerstenS17}. 

Note that, even though semi-external memory model sets a low bound on the memory size, processing graph problems in semi-external memory model is still non-trivial, and \textit{designing efficient semi-external BFS algorithm on general directed graphs $G$ is still an open issue}. The difficulty of processing graph problems in semi-external memory model comes from the huge difference between the size of $G$ and the available memory space that a semi-external algorithm can use. Assuming the nodes of $G$ and the node attributes of $\mathcal{A}$ can be represented by $32$-bit integers, the size of a simple graph $G$ is up to $64n(n-1)$ bits in that it may contain $n(n-1)$ edges. The available memory space of a semi-external algorithm based on the above assumption is normally no higher than $64\times 2n + 3\times 32n$ bits. That is because, besides its \textit{efficiency}, the \textit{minimum memory space requirement}~(MMSR) is also a main factor for measuring a semi-external algorithm~\cite{DBLP:journals/tkde/WanW23}, which is detailedly introduced below.

Normally, to ensure efficiency, semi-external graph algorithms follow one basic framework, as demonstrated in Figure~\ref{fig:basicFrameWork}.  \textit{They maintain a sketch $\mathcal{A}$ of $G$\footnote{The term ``sketch'' is defined in \cite{DBLP:journals/tkde/WanW23}. For the sake of simplicity, a sketch $\mathcal{A}$ of $G$ is normally a graph which can be generated by performing node-contraction operations, edge-contraction operations, or both on $G$.}, and solve the given graph problem by iteratively scanning the disk-resident input graph $G$. During each iteration, they use the scanned edge to enlarge $\mathcal{A}$. When $\mathcal{A}$ cannot be enlarged any further, they restructure $\mathcal{A}$ to a spanning tree $T$ of $G$ with their in-memory processes~(IMPs).} The MMSR of a semi-external algorithm refers to the size of $\mathcal{A}$. When $\mathcal{A}$ is a spanning tree of $G$ and there is no attribute maintained for each node in $\mathcal{A}$, the MMSR of a semi-external algorithm is the smallest. However, in this case, the algorithm has to invoke its IMP after each edge of $G$ is scanned and record all the intermediate results on disk, which is absolutely inefficient~\cite{DBLP:conf/spaa/SibeynAM02}. 

For efficiency, semi-external algorithms claim that $\mathcal{A}$ is a subgraph of $G$, and maintain certain attributes for each node in $\mathcal{A}$, so that they can process the scanned edges in batches to reduce the invocation times of their IMPs and record a portion of the intermediate results in memory. Since the smaller the MMSR of a semi-external algorithm, the better scalability it offers, existing semi-external algorithms maintain $2\times n$ edges in $\mathcal{A}$ and $2$ or $3$ attributes for each node in $\mathcal{A}$.

\begin{figure}[t]
	\centering
	\includegraphics[scale = 0.475]{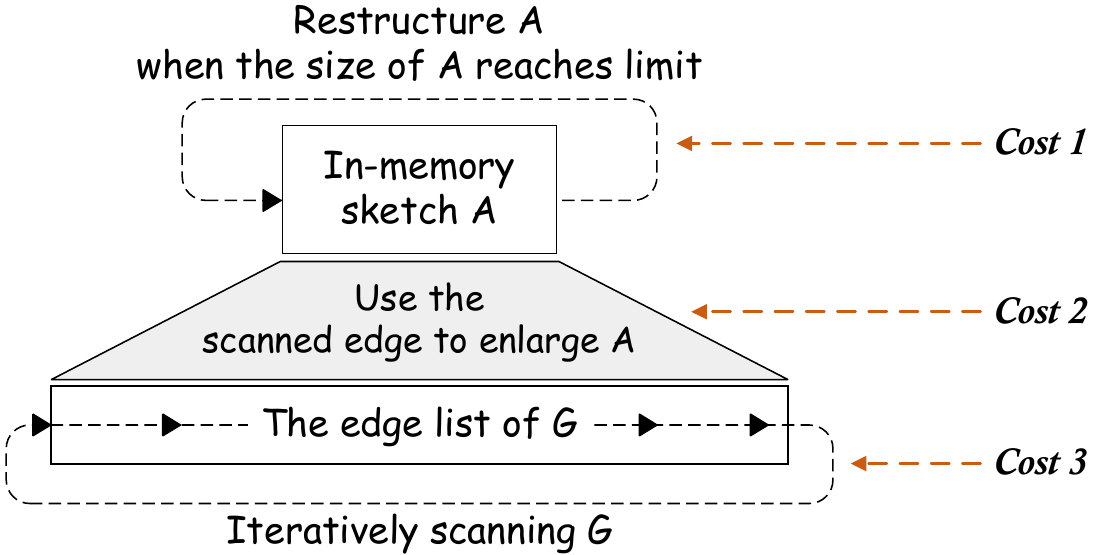}
	\caption{A schematic view of the basic framework of existing semi-external graph algorithms.}
	\label{fig:basicFrameWork}
\end{figure} 

\textbf{Contribution.} Motivated by the importance of addressing graph problems in semi-external memory model, this paper presents a comprehensive study of the semi-external BFS problem on large-scale directed graphs, which is difficult as discussed above.

This paper first introduces EE-BFS and EB-BFS, two naive semi-external BFS algorithms with a small MMSR, both of which  require storing two attributes for each node. As a natural way to address graph problems in semi-external memory model, both EE-BFS and EB-BFS  follow the basic framework demonstrated in Figure~\ref{fig:basicFrameWork}. 

EE-BFS maintains a spanning tree $T$ of $G$ as its in-memory sketch and restructures it during the edge scanning process. EE-BFS consumes $O(n\times m\times $\textit{LLSP}$(G))$ time, since it may restructure $T$ with all the scanned edges and scan $G$ \textit{LLSP}$(G)$ times, as proved in this paper. Here, \textit{LLSP}$(G)$ refers to the length of the longest simple path in $G$. 

EB-BFS maintains a subgraph of $G$ that is composed of $T$ and a set of edges $\mathbb{E}$ of $G$, as its in-memory sketch, and restructures it after scanning a batch of edges. It executes its IMP at most $\lceil\frac{m}{n}\rceil$ times during each iteration. Even though, with EB-BFS, the time consumption of obtaining BFS results in semi-external memory model is reduced to $O(m\times $\textit{LLSP}$(G))$, as proved in this paper, it is still expensive for practical applications that need to process large-scale graphs such as the graph WDC-2014 shown in Table~\ref{tab:real_datasets}.

With the high efficiency requirement of processing large-scale graphs in semi-external memory model, e.g. the graphs containing billions of nodes or billions of edges as shown in Table~\ref{tab:real_datasets}, a novel semi-external BFS algorithm, named EP-BFS, is proposed in this paper. EP-BFS is also designed with a small MMSR, i.e. in EP-BFS, each node in the in-memory sketch also has two attributes.

The efficiency of EP-BFS comes from the reduction of the three main costs in the semi-external graph algorithms based on the framework shown in Figure~\ref{fig:basicFrameWork}. The ability of EP-BFS to reduce the main costs mainly comes from the following aspects of EP-BFS. Firstly, the in-memory sketch $\mathcal{A}$ of EP-BFS only contains $\mathbb{E}$ and a portion of the spanning tree $T$ of $G$, denoted by $\mathbb{T}$. Secondly, EP-BFS introduces a threshold to control whether the newly scanned edge can be used to enlarge $\mathcal{A}$ or not. Besides, EP-BFS is a cache-friendly algorithm, since a novel in-memory sketch management strategy is designed in EP-BFS. Moreover, EP-BFS can further reduce $G$ and $\mathbb{T}$ at the end of each iteration. For more details, please see Section~\ref{sec:epBFs}.

Experiments are conducted on both real and synthetic datasets to evaluate the performance of the proposed algorithms. The real datasets refer to $14$ real networks, including social graphs, crawls from the Internet on different domains, etc., in which graph WDC-2014 has over $1.7$ billion nodes and graph eu-2015 has more than $91$ billion edges, as shown in Table~\ref{tab:real_datasets}. The generated synthetic graphs have up to $2$ billion nodes and $40$ billion edges, as discussed in Section~\ref{sec:experiments}. The experimental results on these graphs show that EP-BFS can efficiently process large-scale graphs in semi-external memory model, and its performance significantly surpasses EE-BFS and EB-BFS.

The contributions of this paper are listed as follows:
\begin{itemize}	
	\item[-] This paper presents a comprehensive study of the non-trivial semi-external BFS problem.
	
	\item[-] This paper presents two naive semi-external BFS algorithms with small MMSRs, namely EE-BFS and EB-BFS, based on the basic framework for addressing graph problems in semi-external memory model.
	
	\item[-] An efficient semi-external algorithm with a small MMSR, named EP-BFS, is introduced in this paper, which significantly reduces the main cost of computing BFS results in semi-external memory model.
	
	\item[-] This paper conducts extensive experiments on both real and synthetic large-scale graphs. Based on the experimental results, the high I/O and CPU efficiency of EP-BFS is confirmed.
\end{itemize}

The rest of this paper first surveys the related works in Section~\ref{sec:related_work}. Section~\ref{sec:preliminaries} gives the preliminaries. Two naive algorithms are given in  Section~\ref{sec:limitations}. EP-BFS is presented in Section~\ref{sec:epBFs}. Experimental evaluation is presented in Section~\ref{sec:experiments}. Section~\ref{sec:conclusion}
concludes the paper.

\section{Related Works} \label{sec:related_work}
As a basic graph traversal problem, many algorithms have been proposed to address the BFS problem on general directed graphs $G$, which are briefly summarized below.

The standard BFS algorithm is proposed on the internal memory model, where the size of the main memory is enough to hold the entire $G$~\cite{DBLP:books/daglib/0023376}. Specifically, the in-memory BFS algorithm traverses $G$ with an empty first-in-first-out queue $Q$. If BFS starts from the node $s$ of $G$, then it traverses $s$ by marking $s$ as \textit{visited} and adding all the adjacent nodes of $s$ to $Q$. Then, it runs iteratively. In each iteration, if $Q$ is not empty, it removes the first node $u$ from $Q$ and traverses $u$ if $u$ is unmarked; if $Q$ is empty, it randomly selects an unmarked node $v$ of $G$ and traverses $v$. Obviously, if $G$ has $n$ nodes and $m$ edges, the time and space costs of the in-memory BFS algorithm are $O(n+m)$, since it only traverses each node and edge in $G$ once, and $Q$ has $O(n+m)$ elements. However, the size of the main memory is hard to be sufficient to maintain current large-scale graphs entirely. 

External algorithms assume that the main memory has the capacity to reside $M$ elements, and each disk block can maintain $B$ elements~\cite{DBLP:conf/soda/BuchsbaumGVW00}. Based on this assumption, lots of external BFS algorithms are developed~\cite{DBLP:conf/soda/Meyer01,DBLP:conf/swat/ArgeT04,DBLP:conf/esa/MehlhornM02}. Even though some of them are efficient, on general directed graphs, the fast external BFS algorithm, named EM-BFS, still requires $O((n+\frac{m}{B})\log_2 \frac{n}{B} + sort(m))$ I/Os, where $sort(m)=O(\frac{m}{B}\log_{\frac{M}{B}}{\frac{n}{B}})$~\cite{DBLP:conf/soda/BuchsbaumGVW00}. EM-BFS is proposed by Buchsbaum et al. in \cite{DBLP:conf/soda/BuchsbaumGVW00}, based on the novel data structure named \textit{buffered repository tree} instead of the tournament tree utilized in previous works. A buffered repository tree $T_{br}$ requires $O(\frac{1}{B}\log_2\frac{N}{B})$ I/Os for its each insert operation, where $N$ represents the total number of insert operations that EM-BFS invokes during its one execution. $T_{br}$ requires $O(\log_2\frac{N}{B}+\frac{S}{B})$ I/Os for its each extract operation, where $S$ denotes the size of the result set of this extract operation. The basic idea of EM-BFS is to use $T_{br}$ to simulate the in-memory BFS algorithm, during which $m$ insert operations and $2\times n$ extract operations are executed on $T_{br}$.

Semi-external algorithms suppose that the size of the main memory has a lower bound, which cannot be arbitrarily small~\cite{DBLP:conf/spaa/SibeynAM02}. It is a key variant of external memory model, and is known as a good trade-off between the internal memory model~(efficiency) and the external memory model~(memory space usage). \textit{Processing graph problems under semi-external memory model has inevitability, since the commonly scale-out techniques will eventually hit the technology and monetary wall}~\cite{DBLP:conf/cidr/KerstenS17}. Generally, given a directed graph $G$, semi-external algorithms assume that the main memory can accommodate $c\times n$ elements of $G$ and $c$ is a small constant. Even though compared with external algorithms, semi-external algorithms can at least maintain a spanning tree $T$ of $G$, efficiently processing graph problems in semi-external memory model is still non-trivial~\cite{DBLP:journals/vldb/ZhangYQCL15}, as discussed in Section~\ref{sec:introduction}. Designing an efficient semi-external BFS algorithm on general directed graphs is still an open issue.

Thus, we are motivated to present a comprehensive study of semi-external BFS algorithms, and present an efficient semi-external BFS algorithm.

\textbf{Additionally.} Instead of the external computing models, such as semi-external model, the single-machine out-of-core graph processing systems~\cite{DBLP:conf/asplos/ZhangWZQHC18,DBLP:conf/usenix/AiZWQCZ17,DBLP:conf/usenix/Vora19,DBLP:conf/usenix/ZhuHC15} are also proposed to process large-scale graphs with memory constraints. Even though such graph processing systems can efficiently report the results of the BFS problem which starts from a specific node $u$ and ends until all the nodes in $G$ reached by $u$ are visited, they are inefficient for computing \textit{the total BFS order for a relatively large graph $G$}~(the type of BFS problem studied in this paper), as we proved in Section~\ref{sec:experiments}. The reason is obvious, that there are often numerous sparse subgraphs or even numerous isolated nodes in real networks~\cite{LKAQ}, for example the graphs shown in Table~\ref{tab:real_datasets}, which leads to a large number of random accesses to memory or disk, i.e. high computation cost.

\section{Problem Statement} \label{sec:preliminaries}
We focus on the semi-external BFS problem on directed disk-resident graphs, as defined below. Table~\ref{tab:notation} shows all the frequently-used notations of this paper.

\begin{table}[t]
	\centering \caption{Frequently-used notations.} \label{tab:notation}
	\small
	\begin{tabular}{c|p{10cm}}
		\hline
		\textbf{Notation} &  \textbf{Description}\\%
		\hline
		$G$ &The input graph with $n$ nodes and $m$ edges\\
		
		$r$&A dummy node of $G$ which connects to all the other nodes and BFS runs from it by default\\
		$bfo(v,T)$	&	The breadth-first order of $v$ on tree $T$\\
		
		
		$T$ & An in-memory spanning tree of $G$ rooted at $r$, which is composed of  $\mathbb{T}$ and $E_\mathbb{T}$ in EP-BFS\\
		
		$\mathbb{E}$ & An edge list containing a portion of edges of $G$\\
		
		$\mathcal{A}$ & The in-memory sketch of $G$\\
		
		\textit{LLSP}$(G)$& The length of the longest simple path in $G$\\
		
		$K$& A concrete number, where $Kn$ is the maximum number of edges that $\mathbb{E}$ can reside with the memory constraint\\
		
		\hline
	\end{tabular}
\end{table}

\begin{definition}
	\label{def:DG}
	\textbf{(Directed graph).} A directed graph $G$ is a tuple $(V,E)$, where (\romannumeral1) $V$ and $E$ are the node set and edge set of $G$, respectively; (\romannumeral2) $E\subseteq V\times V$, and, $\forall e=(u,v)\in E$, $e$ is a directed edge from node $u$ to node $v$.
\end{definition}

For simplicity, given a graph $G$, we let $V(G)$ and $E(G)$ be the node set and edge set of $G$, respectively, and assume that $|V(G)|=n$ and $|E(G)|=m$. Notation $N^+(u,G)=\{v|\forall (u,v)\in E(G)\}$ represents the \textit{out-neighborhood} of $u$ on $G$. As an example, given a directed graph $G_0$ shown in Figure~\ref{fig:introduction_figure}, $V(G_0)=\{a, b, c,\dots\}$, $E(G_0)=\{(a, h),\dots\}$, $N^+(a,G_0)=\{h,f,s,u\}$, $N^+(b,G_0)=\{d,a\}$, etc. \textit{LLSP}$(G)$ is used to denote the length of the longest simple path of $G$. Here, a directed path $p$ of $G$ from node $v_1$ to node $v_k$ can be represented by $(v_1,v_2,v_3,\dots, v_k)$, and $\forall i\in [1,k)$, $(v_i,v_{i+1})\in E(G)$. $p$ is a simple path of $G$ iff $\forall i,j\in [1,k]$ if $i\neq j$ then $v_i\neq v_j$. 

Given a directed graph $G$, \textit{\textbf{breadth-first search~(BFS)}} traverses $G$ with an empty queue $Q$, and runs until it visits all the nodes in $G$. When BFS visits a node $u$ of $G$, it first marks $u$ as \textit{visited}, and then puts all the nodes in $N^+(u,G)$ into $Q$. BFS visits a node $u$ of $G$ iff, (\underline{\textit{Case 1}}) $u$ is the first node in $Q$ and $u$ has not been marked as \textit{visited} yet, or (\underline{\textit{Case~2}}) $Q$ is empty and $u$ has not been marked. The total order that BFS visits the nodes in $G$ is called \textit{breadth-first order}, which is not unique. For example, the breadth-first order of $G_0$ in Figure~\ref{fig:introduction_figure}(a) could be (\romannumeral1) $a,h,f,s,u,b,c,d,g,p,q,t$, (\romannumeral2) $t,q,g,d,c,b,a,h,f,s,u,p$, (\romannumeral3) $t,q,g,p,d,c,h,b,a,f,s,u$, etc. 

\begin{figure}[t]
	\centering
	
	\renewcommand{\thesubfigure}{}
	\subfigure[(a) $T_0$]{
		\includegraphics[scale = 0.47]{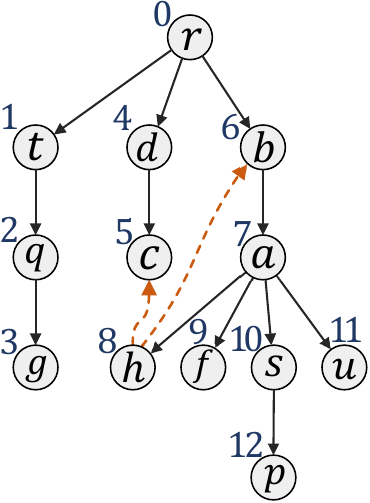}}\quad\,
	\subfigure[(b) $T_1$]{
		\includegraphics[scale = 0.47]{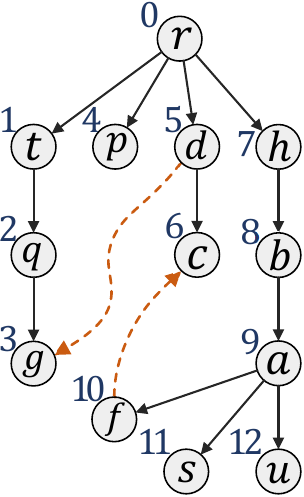}}\quad\,
	\subfigure[(c) $T_2$]{
		\includegraphics[scale = 0.47]{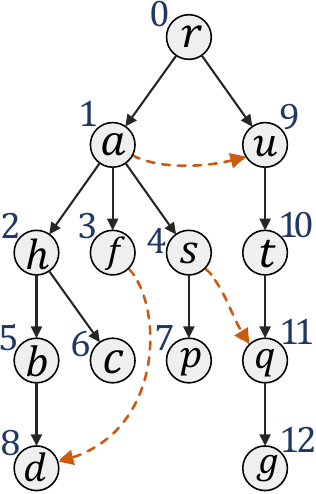}}\quad\,
	\subfigure[(d) $T_3$]{
		\includegraphics[scale = 0.47]{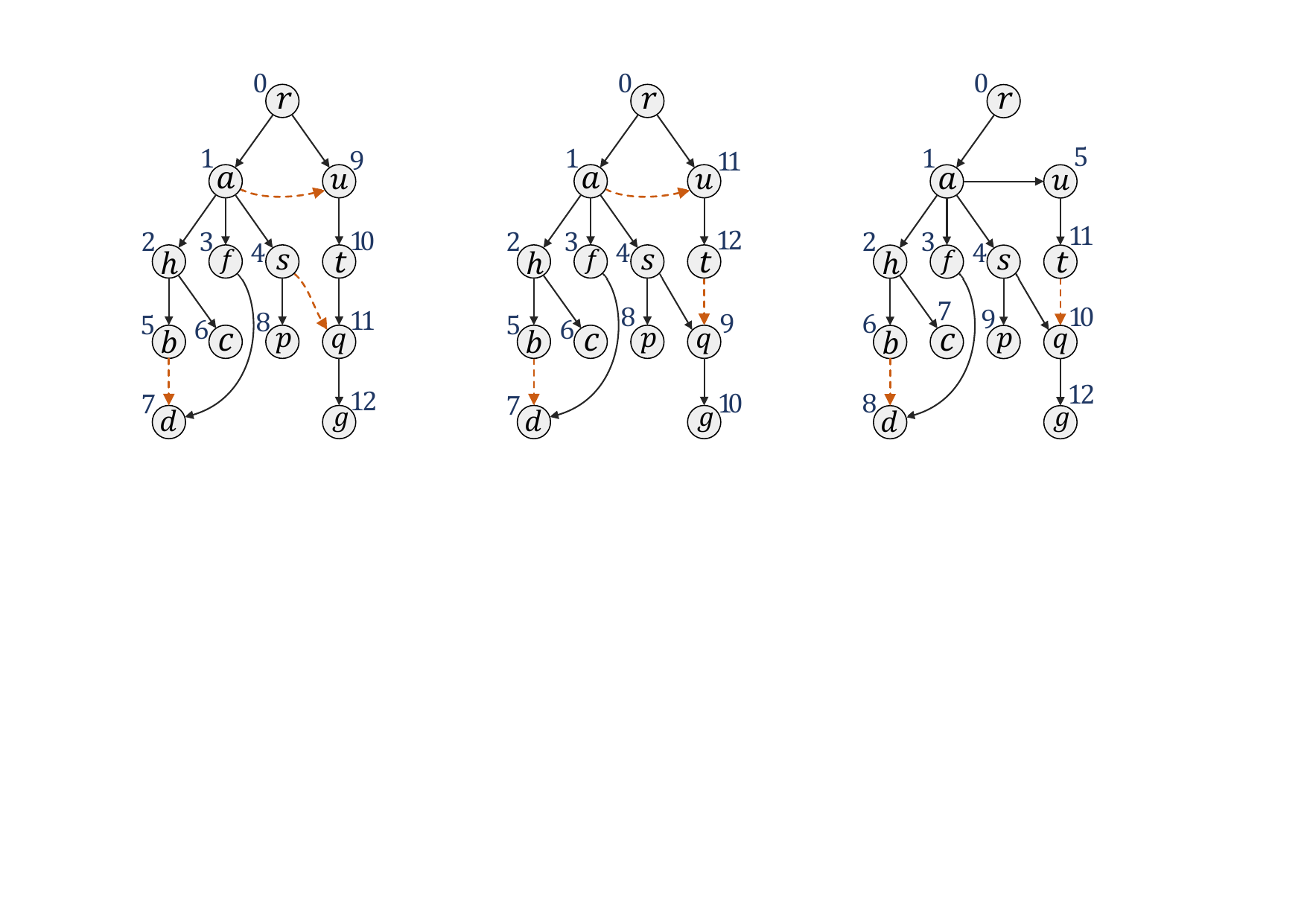}}\quad\,
	
	\caption{A schematic view of four spanning trees $T_0,T_1,T_2$ and $T_3$ of $G_0$. Note that, the edges of $G_0$ include both the solid lines and the dotted lines, whereas only the solid lines represent the tree edges of $T_0,T_1,T_2$ and $T_3$.}
	\label{fig:trees_T0_T3}
\end{figure} 

Notation $bfo(u,T)$ is used to denote the breadth-first order of node $u$ on $T$, and we assume $G$ contains a dummy node $r$. Each spanning tree $T$ of $G$ is an \textit{ordered tree} rooted at $r$, where (\romannumeral1) if $w$ is a node of $T$, and nodes $u$ and $v$ are children of $w$ in $T$, then $u$ is the \textit{left brother} of $v$ in $T$ iff $bfo(u,T)<bfo(v,T)$; (\romannumeral2) a node $x$ is visited in the above \underline{\textit{Case~2}} iff $x$ is a child of $r$. As an example, Figure~\ref{fig:trees_T0_T3} demonstrates three spanning trees $T_0,T_1,T_2$ and $T_3$, in which the breadth-first order of each node on the tree is also given. For instance, $bfo(t,T_0)=1$, $bfo(q,T_0)=2$, $\dots$, $bfo(p,T_0) = 12$.


As BFS visits each node in $G$ only once, each breadth-first order of $G$ corresponds to a \textit{BFS-tree} $T'$ of $G$, where (\romannumeral1) $T'$ is a spanning tree, (\romannumeral2) the children of $r$ in $T'$ are visited when $Q$ is empty, and (\romannumeral3) $\forall u\in V(T')$, if $v$ is a child of $u$ in $T'$ then $v$ is added to $Q$ after $u$ has been visited. That is, a spanning tree $T$ of $G$ is a BFS-tree of $G$ iff  classifying $G$ with $T$, there is no such edge $e'=(u,v)$ in $G$, where (\romannumeral1) $bfo(u,T)<bfo(v,T)$, and (\romannumeral2) supposing the parent of $v$ on $T$ is $w$, $w\neq r$ and $bfo(w,T)\leq bfo(u,T)$. For the sake of simplicity, we name the edges like $e'$ as \textit{V-BFS edges}. Note that, as far as we know, this concept of V-BFS edge is first proposed in this paper. For instance, in Figure~\ref{fig:trees_T0_T3}(c), edge $(a,u)$ is a V-BFS edge of $G_0$, as classified by $T_2$. 

Based on the above discussions, Definition~\ref{def:bfs-tree} defines BFS-tree.

\begin{definition}
	\label{def:bfs-tree}
	\textbf{(BFS-tree).} A spanning tree $T$ of $G$ is a BFS-tree of $G$ iff there is no V-BFS edge in $G$ as classified by $T$.
\end{definition}

For instance, $T_0,T_1$ and $T_2$~(shown in Figure~\ref{fig:trees_T0_T3}) are spanning trees of $G_0$~(shown in Figure~\ref{fig:introduction_figure}(a)). $T_0$ and $T_1$ are BFS-trees of $G_0$, while $T_2$ is not. That is because there are no V-BFS edges in $G_0$ as classified by $T_0$ or $T_1$, but $(a,u)$ is a V-BFS edge of $G_0$ as classified by $T_2$.

\textbf{Problem statement.} This paper focuses on the problem of computing a BFS-tree of $G$ in semi-external memory model efficiently, where $G$ denotes a disk-resident directed graph. Semi-external memory model assumes that the main memory can accommodate a small portion of $G$, but cannot maintain the entire $G$. 

\section{Naive algorithms}\label{sec:limitations}

Processing graph problems on semi-external model is intricate, since algorithms can only reside on a small proportion of $G$ in the main memory. To efficiently address the complicated semi-external BFS problem, we first propose two naive semi-external BFS algorithms, namely EE-BFS~(Algorithm~\ref{algo:ee_bfs}) and EB-BFS~(Algorithm~\ref{algo:eb_bfs}), in this section.

EE-BFS and EB-BFS follow the basic framework of traditional semi-external algorithms, as depicted in Figure~\ref{fig:basicFrameWork}. They maintain a sketch $\mathcal{A}$ of $G$ in the main memory, and gradually restructure $\mathcal{A}$ until the result can be computed based on $\mathcal{A}$. The difference between EE-BFS and EB-BFS is that EE-BFS processes $G$ edge by edge, while EB-BFS processes $G$ in edge batches, which is discussed in detail below. 

In EE-BFS, $\mathcal{A}$ is a spanning tree $T$ of $G$, where initially~(Line~\ref{line:eeBFS:initialize_T}, Algorithm~\ref{algo:ee_bfs}) the root of $T$ is the dummy node $r$ of $G$, and all the other nodes in $G$ are the children of $r$ in $T$. Then, EE-BFS runs with iterations~(Lines~\ref{line:eeBFS:while_start}-\ref{line:eeBFS:while_end}), and in each iteration, it scans $G$ once~(Lines~\ref{line:eeBFS:for_start}-\ref{line:eeBFS:for_end}). For each scanned edge $e=(u,v)$, EE-BFS restructures $T$ in Line~\ref{line:eeBFS:restructure} when $e$ is a V-BFS edge. EE-BFS is terminated in the $k$th iteration, if each scanned edge in this iteration is not a V-BFS edge, i.e. $T$ is a BFS-tree of $G$ based on Definition~\ref{def:bfs-tree}.

\begin{figure}[t]
	\centering
	
	\renewcommand{\thesubfigure}{}
	
	\subfigure[(a) $T_4$]{
		\includegraphics[scale = 0.5]{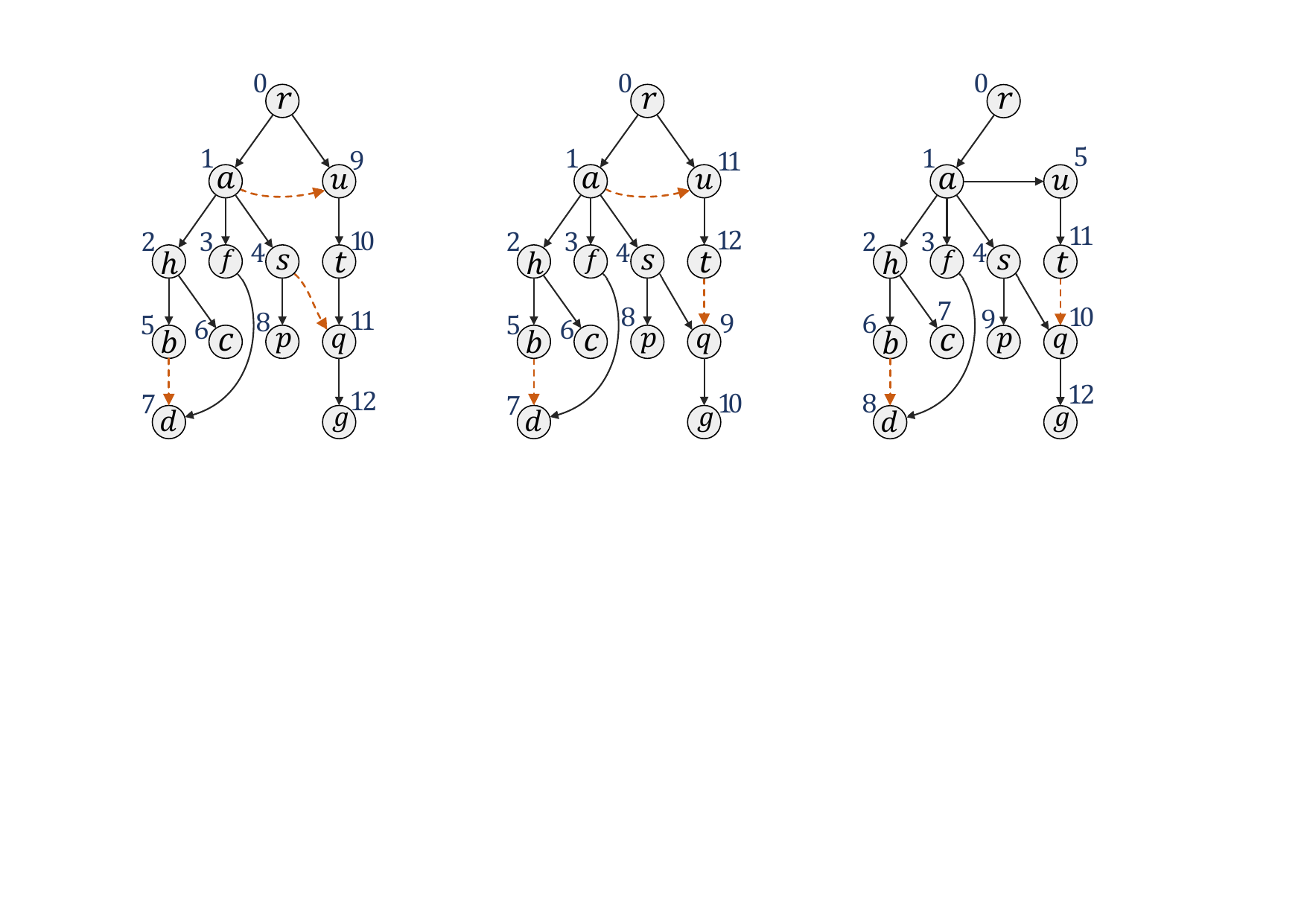}}\quad\,
	\subfigure[(b) $T_5$]{
		\includegraphics[scale = 0.5]{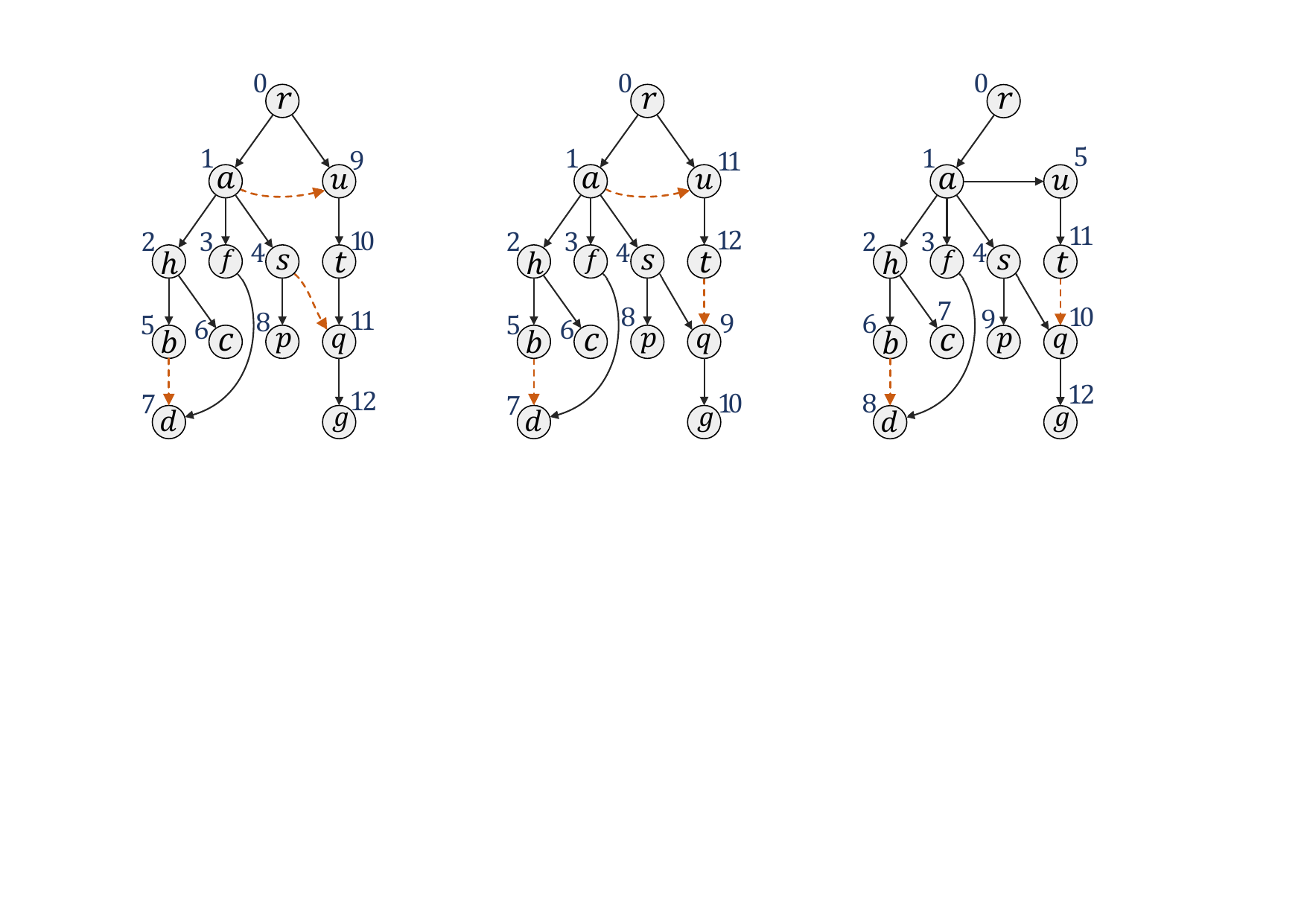}}\quad\,
	\caption{The schematic view of the spanning trees $T_4$ and $T_5$ of $G_0$, where the black solid lines are their tree edges.}
	\label{fig:trees_T4T5}
\end{figure} 

For instance, assuming the edges of $G_0$~(Figure~\ref{fig:trees_T0_T3} and Figure~\ref{fig:trees_T4T5}) are in the following order $\dots, (f,d), (s,q), (a,u), \dots$ When EE-BFS scans edge $(f,d)$ in Line~\ref{line:eeBFS:for_start}, if $T$ is in the form of $T_2$~(Figure~\ref{fig:trees_T0_T3}(c)), then the restructured $T$ at the end of this iteration is in the form of $T_3$~(Figure~\ref{fig:trees_T0_T3}(d)). After that, EE-BFS restructures $T_3$ into $T_4$~(Figure~\ref{fig:trees_T4T5}(a)) since it scans edge $(s,q)$, and restructures $T_4$ into $T_5$~(Figure~\ref{fig:trees_T4T5}(b)) because of edge $(a,u)$.

In EB-BFS, $\mathcal{A}$ is a subgraph of $G$, and is composed of $T$ and $\mathbb{E}$. EB-BFS also processes $G$ with iterations, and in each iteration, it scans $G$ once, as shown in Lines~\ref{line:ebBFS:while_start}-\ref{line:ebBFS:while_end}. The initialization of $T$ in EB-BFS and that in EE-BFS are the same. However, different from EE-BFS, EB-BFS only restructures $T$ when it scans $Kn$ edges of $G$, as demonstrated in Lines~\ref{line:ebBFS:for_start}-\ref{line:ebBFS:for_end}. Here, $K$ refers to a concrete number, by default, $K=1$. The $Kn$ edges are maintained in the edge list $\mathbb{E}$ in the order that they are scanned. EB-BFS computes the BFS-tree $T'$ of the graph composed of $T$ and $\mathbb{E}$ based on Stipulation~\ref{sti:stipulation}. 

\begin{stipulation}
	\label{sti:stipulation}
	Assuming $\mathcal{G}$ is the graph composed of $T$ and $\mathbb{E}$. For each node $u$ in $\mathcal{G}$, after traversing $u$, (\romannumeral1) BFS first adds the elements in $N^+(u,T)$ to its search queue from the leftmost child of $u$ on $T$ to the rightmost child of $u$ on $T$, and then  (\romannumeral2) adds $v$ into $T$ if $\exists (u,v)\in \mathbb{E}$ from front to back.
\end{stipulation}

EB-BFS can be terminated in its $k$th iteration, iff, in the $k$th iteration of the loop in Lines~\ref{line:ebBFS:while_start}-\ref{line:ebBFS:while_end}, each $T'$ has the same edge set as $T$~(Line~\ref{line:ebBFS:update_true}). For instance, if at the beginning of one iteration of the loop in Lines~\ref{line:ebBFS:for_start}-\ref{line:ebBFS:for_end}, Algorithm~\ref{algo:eb_bfs}, $T$ is in the form of $T_2$~(Figure~\ref{fig:trees_T0_T3}(c)), $\mathbb{E}=\{(f,d),(s,q),(a,u)\}$, then at the end of this iteration $T$ is in the form of $T_5$~(Figure~\ref{fig:trees_T4T5}(b)).

\begin{algorithm}[t]
	\caption{EE-BFS$(G)$}
	\label{algo:ee_bfs}
	\begin{algorithmic}[1]
		\renewcommand{\algorithmicrequire}{\textbf{Input:}}
		\renewcommand{\algorithmicensure}{\textbf{Output:}}
		\renewcommand{\algorithmiccomment}[1]{  #1}
		
		\REQUIRE $G$ is a graph stored in disk.
		\ENSURE  The BFS-tree of $G$.
		
		\STATE Initialize tree $T$, \textit{update}$\,\gets\,$\textit{true}\label{line:eeBFS:initialize_T}
		
		\STATE \textbf{while} \textit{update}$\,=\,$\textit{true} \textbf{do}\label{line:eeBFS:while_start}
		\STATE \quad \textit{update}$\,\gets\,$\textit{false}\label{line:eeBFS:while_update}
		\STATE \quad \textbf{for} each $e=(u,v)\in E(G)$ \textbf{do} \label{line:eeBFS:for_start}
		\STATE \quad\quad \textbf{if} $e$ is a V-BFS edge \textbf{then}\label{line:eeBFS:forIf}
		\STATE \qquad\quad Letting $v$ be the rightmost child of $u$ on $T$\label{line:eeBFS:restructure}
		\STATE \qquad\quad  \textit{update}$\,\gets\,$\textit{true}			\label{line:eeBFS:update_change}
		\STATE \quad \textbf{end for}\label{line:eeBFS:for_end}
		\STATE \textbf{end while}\label{line:eeBFS:while_end}
		
	\end{algorithmic}
\end{algorithm}

\begin{algorithm}[t]
	\caption{EB-BFS$(G)$}
	\label{algo:eb_bfs}
	\begin{algorithmic}[1]
		\renewcommand{\algorithmicrequire}{\textbf{Input:}}
		\renewcommand{\algorithmicensure}{\textbf{Output:}}
		\renewcommand{\algorithmiccomment}[1]{  #1}
		
		\REQUIRE $G$ is a graph stored in disk.
		\ENSURE  The BFS-tree of $G$.
		
		\STATE Initialize tree $T$, \textit{update}$\,\gets\,$\textit{true}\label{line:ebBFS:initialize_T}
		
		\STATE \textbf{while} \textit{update}$\,=\,$\textit{true} \textbf{do}\label{line:al:while}\label{line:ebBFS:while_start}
		\STATE \quad \textit{update}$\,\gets\,$\textit{false}\label{line:ebBFS:while_update} 
		\STATE \quad \textbf{for} each $Kn$ edges $\mathbb{E}$ of $G$ \textbf{do} \hfill \COMMENT{// $|\mathbb{E}|=Kn$} \label{line:al:for}\label{line:ebBFS:for_start}
		\STATE \quad \quad $T'\gets\,$IM-BFS$(T,\mathbb{E})$\label{line:al:IM_BFS}
		\STATE \quad  \quad \textit{update}$\,\gets\,$\textit{true if $E(T)\neq E(T')$}\label{line:ebBFS:update_true}	
		
		\STATE \quad \quad $T\gets T'$ \COMMENT{// The BFS-tree of graph $\big(V(T),E(T)\cup\mathbb{E}\big)$}\label{line:al:reduction}\label{line:ebBFS:reSet_T}
		\STATE \quad \textbf{end for}\label{line:ebBFS:for_end}
		\STATE \textbf{end while}\label{line:ebBFS:while_end}
		
	\end{algorithmic}
\end{algorithm}

\textbf{Analysis.} In this part, we discuss the correctness and complexities of EE-BFS and that of EB-BFS. Firstly, Theorem~\ref{theorem:eb_searchTime} states that EE-BFS and EB-BFS can be terminated and scan $G$ at most \textit{LLSP}$(G)$ times. 


\begin{theorem}
	\label{theorem:ee_searchTime}\label{theorem:eb_searchTime}
	EE-BFS and EB-BFS can be terminated and scans $G$ at most \textit{LLSP}$(G)$ times.
\end{theorem}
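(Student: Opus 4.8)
The plan is to bound the number of full scans by a layering (Bellman--Ford--style) argument on the \emph{tree depth} of nodes. Write $d(v)$ for the depth of $v$ in $T$ below the dummy root $r$. The first ingredient is a ceiling on depth: every path from a child of $r$ down to $v$ in $T$ uses only tree edges, each of which is an edge of $G$, and a tree path is simple; hence a node at depth $d$ witnesses a simple path of $d-1$ edges in $G$, so $d(v)\le \textit{LLSP}(G)+1$ throughout the run. Thus only $\textit{LLSP}(G)+O(1)$ distinct BFS levels ever need to be settled, and since the initial star already places every node as a child of $r$ (level~$1$), the nodes that are genuinely at level~$1$ in the output are correct from the start.

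The core of the argument is to show that \emph{each scan that performs any restructuring finalizes one more BFS level}. I would fix the output tree, let $\ell^*(v)$ be the final depth of $v$, and prove by induction on $i$ that after the $i$-th execution of the \textbf{while} body every node $v$ with $\ell^*(v)\le i$ already occupies its final parent and position and is never moved again. For the inductive step, a node $v$ with $\ell^*(v)=i+1$ has final parent $p$ with $\ell^*(p)=i$, which is fixed by the hypothesis; when the scan of iteration $i+1$ reads the edge $(p,v)$, the V-BFS test of Line~\ref{line:eeBFS:restructure} fires and makes $v$ the rightmost child of $p$ (Definition~\ref{def:bfs-tree}), matching its final placement. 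The two sub-claims to verify are (a) once $v$ is attached to $p$, no other incident edge can later detach it, because $p$ is the earliest-visited in-neighbour of $v$ and so no remaining edge $(u,v)$ yields a V-BFS violation; and (b) restructurings performed elsewhere in the same or later scans leave the already-settled lower layers untouched.

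Combining the two ingredients finishes EE-BFS: one new level is settled per scan in the worst case, levels range over at most $\textit{LLSP}(G)$ values beyond the free first level, and once all levels are settled the tree contains no V-BFS edge, so the next pass leaves \textit{update}$\,=\,$\textit{false} and the loop exits. A careful accounting of the initialization and of the terminating scan then matches the stated bound of at most $\textit{LLSP}(G)$ scans. For EB-BFS I would argue by domination: within a single scan it incorporates all edges in batches of $Kn$ and recomputes an exact local BFS-tree on $\big(V(T),E(T)\cup\mathbb{E}\big)$ via IM-BFS respecting Stipulation~\ref{sti:stipulation}, which is at least as strong as the edge-by-edge fixes of EE-BFS; hence it settles at least as many levels per scan, and the same level bound applies.

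The main obstacle is sub-claim (b) together with the fact that $bfo$ is a \emph{global} quantity: restructuring one subtree renumbers the $bfo$ values of nodes elsewhere, so ``finalized'' must be phrased in terms of structural position and relative (ancestor and left-to-right) order rather than absolute $bfo$ values, and one must then check that the V-BFS predicate---which is stated through $bfo$ comparisons---is invariant under these renumberings once the lower layers are fixed. Managing the ordered-tree tie-breaking of Stipulation~\ref{sti:stipulation} consistently across both algorithms, and pinning down the exact off-by-one in the terminating scan, is where the real work lies.
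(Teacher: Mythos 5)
Your proposal takes essentially the same route as the paper: an induction on the number of scans showing that after the $i$-th pass every node at BFS level at most $i$ is permanently settled, combined with the observation that tree paths are simple paths in $G$ so levels are bounded by \textit{LLSP}$(G)$. The paper's own proof additionally splits into the cases where the leftmost child of $r$ does or does not reach all other nodes (handling cross-subtree migration via cut-trees), but your formulation in terms of final positions in the output tree absorbs that case, and you are in fact more explicit than the paper about the global $bfo$-renumbering issue and the domination argument for EB-BFS.
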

\begin{proof}	
	Below, we use $depth(u,T)$ to represent the depth of a node on a spanning tree $T$. Supposing that root $r$ reaches $u$ by passing through at least $k$ edges of $T$, i.e. $T$ has a path $(r~(v_0), v_1,\dots, u~(v_k))$ of length $k$, then \textit{depth}$(u,T)=k$.
	
	\textit{\textbf{(\romannumeral1)}} Assuming the leftmost child of $r$ can reach all the other nodes of $G$. This statement is valid iff, \textit{at the end of the $i$th iteration,  $\forall u\in V(G)$, if \textit{depth}$(u,T)\leq i$, then $bfo(u,T)$ is fixed.} 	We prove the following statement by induction on $i$. Firstly, this statement is certainly correct when $i=1$, since if a node $x$ on $T$ and $depth(x,T)\leq 1$ then $x=r$ or $x$ is the leftmost child of $r$. Assuming the following statement is correct when $i=k$, then the following statement is certainly correct:``at the end of the $k$th iteration of the loop in Lines~\ref{line:eeBFS:while_start}-\ref{line:eeBFS:while_end}, Algorithm~\ref{algo:ee_bfs},  $\forall u\in V(G)$, if \textit{depth}$(u,T)\leq k$, then $bfo(u,T)$ is fixed.'', since both EE-BFS and EB-FBS scan the entire $G$ during the $(k+1)$th iteration. 
	
	\textit{\textbf{(\romannumeral2)}} Assuming the leftmost child of $r$ cannot reach all the other nodes of $G$. Supposing in the $i$th iteration $r$ has $\gamma$ out-neighbors on $T$, and $T_\alpha$ represents the cut-tree of $T$ rooted at the $\alpha$th child of $r$. Then, it could be easily proved based on (\romannumeral1) that at the end of the $i$th iteration,  $\forall u\in V(G)$, if \textit{depth}$(u,T)\leq i$ and $u\in V(T_\alpha)$, then either (1) the position of $u$ on $T$ is fixed, or (2) $u$ belongs to $T_\beta$~($\beta<\alpha$) and the $\beta$th child of $r$ on $T$ needs to pass more than $depth(u,T)$ nodes for reaching $u$ when $T$ is a BFS-tree of $G$. 
\end{proof}

The time cost of EE-BFS is $O\big(n\times m\times\,$\textit{LLSP}$(G)\big)$, in that EE-BFS needs to restructure $T$ with each scanned edge, it has to update the the breadth-first order of the nodes on $T$, and the time cost of updating the breadth-first order is $O(n)$. The time cost of EB-BFS is $O\big(\lceil\frac{m}{n}\rceil\times n\times\,$\textit{LLSP}$(G)\big)=O\big(m\times\,$\textit{LLSP}$(G)\big)$, since EB-BFS needs to execute its IMP $\lceil\frac{m}{n}\rceil$ times in one iteration, each execution of which consumes $O(n)$ time. \textit{The I/O costs of EE-BFS and EB-BFS are $O\big(\frac{m}{B}\times $\textit{LLSP}$(G)\big)$}, since they scan the disk-resident $G$ once in each iteration. $B$ is the size of one disk block.

The MMSR of EE-BFS and that of EB-BFS are $2\times n$, since they need to maintain two attributes for each node $v$ on $T$, i.e. (\romannumeral1) the breadth-first order of $v$ on $T$ and (\romannumeral2) the parent of $v$ on $T$. Besides, to the best of our knowledge, there is no technique that is proposed for packing these two attributes, because in semi-external algorithms the values of their node attributes are changed constantly and cannot be estimated in advance. For efficiency, current semi-external algorithms maintain their node attributes in a contiguous memory space.

The correctness of EE-BFS and that of EB-BFS are proved in Theorem~\ref{theorem:correctness_eeBfs}, respectively.
\begin{theorem}
	\label{theorem:correctness_eeBfs}
	EE-BFS and EB-BFS return $T$ correctly.
\end{theorem}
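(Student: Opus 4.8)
The plan is to prove correctness by establishing, for each algorithm separately, that the loop's stopping condition is logically equivalent to the defining property of a BFS-tree in Definition~\ref{def:bfs-tree} (the absence of V-BFS edges), and then to invoke Theorem~\ref{theorem:eb_searchTime} so that this stopping condition is in fact reached within \textit{LLSP}$(G)$ scans. Here ``returns $T$ correctly'' means that upon termination $T$ is a spanning tree of $G$ rooted at $r$ and is a BFS-tree of $G$. Before the two cases, I would record the easy invariant that $T$ remains a spanning tree throughout: the initialization builds the star tree on $r$, and the only modification, reparenting $v$ to be the rightmost child of $u$, preserves acyclicity because a V-BFS edge $(u,v)$ forces $bfo(u,T)<bfo(v,T)$, so $u$ cannot be a descendant of $v$; hence every node retains a unique parent and no cycle is introduced.

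For EE-BFS the equivalence is almost immediate. By Theorem~\ref{theorem:eb_searchTime} the while loop exits after finitely many iterations, and it can exit only with \textit{update}$=$\textit{false}. Since \textit{update} is reset to \textit{false} at the start of the terminating iteration and is set to \textit{true} (Line~\ref{line:eeBFS:update_change}) exactly when a V-BFS edge is detected (Line~\ref{line:eeBFS:forIf}), exiting means that during that final full scan of $E(G)$ no scanned edge was a V-BFS edge. Crucially, because no reparenting occurred, $T$ is unchanged throughout this iteration, so all per-edge tests were made against a single fixed tree; therefore the returned $T$ has no V-BFS edge, and by Definition~\ref{def:bfs-tree} it is a BFS-tree of $G$.

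For EB-BFS I would rely on the correctness of the in-memory routine IM-BFS (Line~\ref{line:al:IM_BFS}) which, by Stipulation~\ref{sti:stipulation}, returns the BFS-tree $T'$ of the graph $\big(V(T),E(T)\cup\mathbb{E}\big)$. The loop exits only with \textit{update}$=$\textit{false}, which by Line~\ref{line:ebBFS:update_true} means $E(T')=E(T)$ for every batch $\mathbb{E}$ processed in the terminating iteration; as before, $T$ is then constant across that iteration. The heart of the argument is the per-batch equivalence ``$E(T')=E(T)$'' $\iff$ ``no edge of $\mathbb{E}$ is a V-BFS edge with respect to $T$.'' For the forward direction, if some $e\in\mathbb{E}$ were a V-BFS edge then $T$ would fail Definition~\ref{def:bfs-tree} on $\big(V(T),E(T)\cup\mathbb{E}\big)$, so the BFS-tree $T'$ computed by IM-BFS could not coincide with $T$, contradicting $E(T')=E(T)$. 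For the converse I would first note that a tree edge of $T$ is never a V-BFS edge with respect to $T$ (its parent is its own tail), so if no edge of $\mathbb{E}$ is a V-BFS edge either, then $T$ already satisfies Definition~\ref{def:bfs-tree} on $\big(V(T),E(T)\cup\mathbb{E}\big)$ and IM-BFS returns it unchanged. Since the batches partition $E(G)$ and, with $T$ fixed, the V-BFS property of an edge depends only on $T$, combining the per-batch conclusions shows that no edge of $G$ is a V-BFS edge, so the returned $T$ is a BFS-tree of $G$.

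The step I expect to be the main obstacle is precisely this per-batch equivalence for EB-BFS, because it requires reconciling the \emph{local} BFS-tree that IM-BFS computes from $T$ together with only one batch $\mathbb{E}$ against the \emph{global} V-BFS-edge condition over all of $E(G)$, and it leans on the determinism of the ordered-tree construction (Stipulation~\ref{sti:stipulation} together with the ``rightmost child'' convention) to guarantee that an edge set determines the tree, so that $E(T')=E(T)$ indeed means $T'=T$. The EE-BFS direction, by contrast, is routine once the fixed-tree observation in the terminating iteration is made.
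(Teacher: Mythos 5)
Your proof is correct in substance but takes a genuinely different route from the paper's. The paper argues via a convergence invariant: it introduces a cumulative depth $depth'(v,T)$ (summing the maximum depths of the subtrees of $r$ that precede $v$'s subtree, plus $v$'s depth within its own subtree) and asserts that after the $i$th iteration no edge between nodes of $depth'$ at most $i$ is a V-BFS edge; correctness then follows because every node eventually falls under this invariant. This is essentially a strengthening of the induction already used for Theorem~\ref{theorem:eb_searchTime}, and it yields a quantitative ``how much of the tree is finalized after each scan'' statement, but it leaves the last step (that the exit condition certifies a BFS-tree) implicit. You instead separate termination (delegated to Theorem~\ref{theorem:eb_searchTime}) from partial correctness, and derive the latter purely from the loop's exit test: a final scan with \textit{update}$=$\textit{false} is a full scan against a \emph{fixed} tree in which no V-BFS edge was found, so Definition~\ref{def:bfs-tree} applies directly. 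That observation --- that $T$ is unchanged throughout the terminating iteration, so all per-edge or per-batch tests refer to one and the same tree --- is the right key point and is absent from the paper's proof. Two cautions on the EB-BFS half: only the direction ``some V-BFS edge in $\mathbb{E}$ $\Rightarrow E(T')\neq E(T)$'' is actually needed for your chain of implications, and it is the safer one to lean on; the converse you also assert is delicate for children of the dummy root $r$, since an edge $(u,v)$ with $v$ a child of $r$ is excluded from the V-BFS definition yet can still cause IM-BFS to reparent $v$ under Stipulation~\ref{sti:stipulation}, so $E(T')=E(T)$ need not follow from the absence of V-BFS edges alone. Since that converse is only relevant to termination, which you already obtain from Theorem~\ref{theorem:eb_searchTime}, your argument stands; I would simply drop or qualify the biconditional phrasing.
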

\begin{proof}
	This statement is valid, because \textit{in the $j$th~($j>i$) iteration,  $\forall e$$=$$(u,v)\in E(G)$, if  $depth'(u,T)$$\leq$$i\wedge depth'(v,$ $T)$$\leq$$i$, $e$ is not a V-BFS edge.} Here, notation $depth'(v,T)$ is different from $depth(v,T)$. Letting $T_i$ and $T_j$ represent the cuttrees of $T$ rooted at the $i$th and the $j$th leftmost children of $r$, respectively;  $d(T_i)=\max\{depth(x,T), \forall x\in V(T_i)\}$ represents the maximum depth of the nodes on $T_i$; $v$ belongs to $T_j$. $depth'(v,T)=\Sigma_{1\leq i\leq j-1} d(T_i)+depth(v,T_j)$. 
\end{proof}

\section{EP-BFS: The efficient strategy}\label{sec:epBFs}

\textbf{Overview.} EP-BFS maintains two attributes for each node $u$ in $G$, i.e. $u.\mathcal{B}$ and $u.\mathcal{P}$, that is the MMSR of EP-BFS is also limited to $2\times n$. Despite this, EP-BFS is still CPU and I/O efficient, because the main costs of semi-external graph algorithms based on the basic framework shown in Figure~\ref{fig:basicFrameWork} are reduced in EP-BFS. Here, the main costs include:
\begin{itemize}
	\item[-] \textit{(Cost~1:)} \textit{the total cost of executing the IMP of a semi-external algorithm to restructure its in-memory sketch $\mathcal{A}$}, which is related to \textit{(c-\romannumeral1)} the total invocation times of the IMP and \textit{(c-\romannumeral2)} the cost of executing the IMP once; 
	
	\item[-] \textit{(Cost~2:)} \textit{the total cost of using the scanned edges to enlarge $\mathcal{A}$}, which is dependent on \textit{(c-\romannumeral3)} the total number of scanned edges added into $\mathcal{A}$, and \textit{(c-\romannumeral4)} the time cost of enlarging $\mathcal{A}$ with a scanned edge;
	
	\item[-] \textit{(Cost~3:)} \textit{the total cost of scanning the edge list of $G$}, which corresponds to \textit{(c-\romannumeral5)} the total number of I/Os for scanning the disk-resident edges.
\end{itemize}

Below, a brief introduction of EP-BFS is presented, in which the reason why EP-BFS can reduce the above three main costs is discussed in detail.

Firstly, the in-memory sketch of EP-BFS only contains $\mathbb{E}$ and \textit{a portion of a spanning tree $T$}~(denoted by $\mathbb{T}$) for $G$ rather than $\mathbb{E}$ and the complete $T$. That is, EP-BFS decomposes $T$ into two parts, where one is an in-memory subgraph  $\mathbb{T}$ of $G$, and another is a disk-resident edge set $E_T$. Instead of processing $G$ directly, EP-BFS first scans the input graph $G$. In the meantime, EP-BFS can identify the positions of certain nodes on $T$ when $T$ is a BFS-tree of $G$, and then removes all such nodes from $G$. 

Note that the reduction of the node set maintained in the main memory greatly affects the efficiency of a semi-external algorithm. That is because if a node $u$ is maintained in the main memory, at least an edge related to $u$ should be residing in the main memory. Assuming $G_R$ denotes the reduced $G$, and the computing device can accommodate $(1+K)n$ edges in the main memory, then the upper bound of the times that the IMP of EP-BFS is executed during one iteration is lowered from $\lceil\frac{m}{Kn}\rceil$ to $\lceil\frac{|E(G_R)|}{Kn+n-|V(G_R)|}\rceil$. That is, \textit{c-\romannumeral1} is reduced.

The fewer times the IMP of EP-BFS is executed, the more significant the decrease in computational cost of EP-BFS, as  the IMP of a semi-external has a high number of invocations, especially for large-scale graphs. Even though the in-memory computation is fast, the time cost of executing the IMP of a semi-external algorithm cannot be neglected~\cite{10.5555/1841497}. For example, graph gsh-2015 utilized in our experiments contains $988$ million nodes, as shown in Table~\ref{tab:real_datasets}.

\begin{figure}[t]
	\centering
	\includegraphics[scale = 0.7]{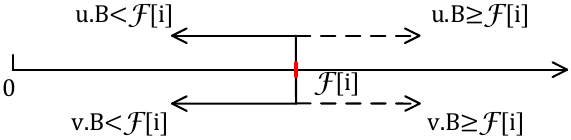}
	\caption{The relationships of an edge $e=(u,v)$ and the threshold $\mathcal{F}[i]$.}
	\label{fig:IMin}
\end{figure} 

More importantly, EP-BFS utilizes a threshold $\mathcal{F}[i]$ to reduce both \textit{c-\romannumeral1} and \textit{c-\romannumeral3}. Here, the notation $i$ in $\mathcal{F}[i]$ represents the number of times that the IMP of EP-BFS has been executed in the current iteration. The determination of whether a newly scanned edge is added to the in-memory sketch or not is based on $\mathcal{F}[i]$. $\mathcal{F}[i]$ is initialized to $+\infty$. The value of $\mathcal{F}[i]$ is updated to a smaller value $k$ only if the newly scanned edge $(u,v)$ is a V-BFS edge as classified by $T$~(the combination of $\mathbb{T}$ and $E_\mathbb{T}$), and $u.\mathcal{B}=k$. In EP-BFS, the edges $e=(u,v)$ of $G$ are classified based on the relationship between the attributes $\mathcal{B}$ and $\mathcal{F}[i]$, as shown in Figure~\ref{fig:IMin}. $e$ is utilized to enlarge the in-memory sketch, when both $u.\mathcal{B}$ and $v.\mathcal{B}$ are equal to or larger than $\mathcal{F}[i]$, or $e$ is a V-BFS edge as classified by $T$.

In addition, EP-BFS fully utilizes cache to decrease \textit{c-\romannumeral2} and \textit{c-\romannumeral4}, in that a novel strategy is proposed for efficiently managing its in-memory sketch $\mathcal{A}$. Traditionally, as a semi-external algorithm has to constantly insert edges into $\mathcal{A}$ or remove edges from $\mathcal{A}$, it is a natural way to use two arrays \textit{Adj}~(length of $n$) and \textit{ES}~(length of $2n\times K$) for residing $\mathcal{A}$ in the main memory, since it can protect the performance of a semi-external algorithm from being affected by frequently requesting and freeing up memory space or being affected by the garbage collection mechanisms of different implementation languages. Here, $K>1$ because semi-external algorithms need to reside at least $n$ edges in memory. 

For graphs with a small node set, only utilizing these two arrays without any other strategies is sufficient, in that when the elements in the arrays need to be frequently accessed, computing devices can load the entire arrays into cache. When processing graphs with numerous nodes, e.g. graph gsh-2015 in Table~\ref{tab:real_datasets} has over $988$ million nodes, the sizes of the arrays are too large to reside in cache~\cite{10.5555/1841497}. Randomly accessing the elements of the two arrays is inevitable when searching in $\mathcal{A}$ or enlarging $\mathcal{A}$, which leads to inefficiency.

To decrease the number of times that EP-BFS accesses the elements maintained in \textit{Adj} and \textit{ES} randomly, and to achieve high cache efficiency, EP-BFS stores the out-neighbors of a node $u$ of $\mathcal{A}$ in \textit{ES} contiguously, which is non-trivial. The difficulty of managing the edges in \textit{ES} with a specified order not only comes from \textit{the fact that the scale of $G$ can be large or the assumption that the edges of $G$ are resident on disk randomly}, but also because \textit{semi-external algorithms need to insert and delete the edges in memory frequently}.

In other words, this is a key contribution of EP-BFS, since (\romannumeral1) after storing edges in \textit{ES}, the random accesses to \textit{ES} caused by the operations of visiting the out-neighborhood of a node in $\mathcal{A}$ can be greatly reduced; (\romannumeral2) semi-external graph algorithms need to frequently and repeatedly visit the   out-neighborhood of a node in $\mathcal{A}$.

Finally, EP-BFS reduces \textit{c-\romannumeral5}, because it reorganizes the edge set $E_R$ of $G_R$ in the form of an adjacent list to achieve a smaller storage volume~(\textit{I/O efficiency}), which can also help EP-BFS to obtain a better cache utilization~(\textit{CPU efficiency}). EP-BFS scans $E_R$ with iterations rather than $E(G)$. During each iteration, it can further reduce $G$ and shrink $\mathbb{T}$, until $\mathbb{T}$ is an empty tree. Note that the smaller the size of $\mathbb{T}$, the more newly scanned edges can be stored in the main memory, and the fewer times that the IMP of EP-BFS is executed during one iteration.

\subsection{Algorithm description}\label{sec:epBFs:algorithmDescription}

In this part, we introduce  EP-BFS in detail, based on its pseudocode shown in Algorithm~\ref{algo:ep_bfs}. 

\begin{algorithm}[t]
	\caption{EP-BFS$(G)$}
	\label{algo:ep_bfs}
	\begin{algorithmic}[1]
		\renewcommand{\algorithmicrequire}{\textbf{Input:}}
		\renewcommand{\algorithmicensure}{\textbf{Output:}}
		\renewcommand{\algorithmiccomment}[1]{  #1}
		
		\REQUIRE $G$ is a graph stored in disk.
		\ENSURE  The BFS-tree of $G$.
		
		\STATE $E_1,E_2,\dots, E_k\gets\,$ ScanG(G)\label{line:pg:ScanG}, $E_R\gets[], \ E_o\gets[], \ E_i\gets[], \ \mathbb{E}\gets[],\ E_\mathbb{T}\gets[]$ \label{line:pg:EREOEIE}
		\STATE $\mathbb{T}\gets$ a forest with nodes $v_1,v_2,\dots,v_n$ and zero edges
		
		\STATE \textbf{for} each node $u\in V(G)$ \textbf{do}\label{line:pg:for:start}
		
		\STATE\quad $V_u,E_o\gets$Read($E_1\dots, E_k,u,E_o$)\label{line:pg:for:read}
		
		\STATE\quad \textit{If $id(u)$$=$$0,\ \forall v\in V_u$, add $(u,v)$ into $E_i$ and \textbf{goto} Line}~\ref{line:pg:for:start}\label{line:pg:for:if1}
		
		\STATE\quad $\forall v\in V_u$, add $(u,v)$ into $E_R$ and $\mathbb{E}$\label{line:pg:for:otherwise}
		\STATE\quad \textit{When $|\mathbb{E}|+|\mathbb{T}|$ reaches limit, execute }Reduce($\mathbb{T}$, $\mathbb{E}$)\label{line:pg:for:reduce}
		\STATE\textbf{end for}\label{line:pg:for:end}, \quad$\mathbb{T}\gets\,$TreeReduce($\mathbb{T}$, $\mathbb{E}$)\label{line:pg:Treduce} 
		
		\STATE $\mathbb{E}\gets[]$, $E_{next}\gets\phi$,  $\,\mathcal{F}_{R}\gets-\infty$, $\,\mathcal{F}_{C}\gets-\infty$, $\,f\gets 1$, $\,t\gets 0$\label{line:ep:intializedOthers}
		
		\STATE \textbf{while} $\mathcal{F}_{R}<n$  \textbf{do}\label{line:ep:while:start}
		\STATE\quad $i\gets 0$, $\mathcal{F}[i]\,\gets +\infty$\label{line:ep:while:initialize}
		\STATE \quad \textbf{for} each edge $e=(u,v)$ in $E_{R}$ \textbf{do} \label{line:ep:while:for:start}
		
		\STATE\qquad\textit{If $u.\mathcal{B}\negmedspace\leq\negmedspace \mathcal{F}_{R}$ \textbf{or} $v.\mathcal{B}\negmedspace\leq\negmedspace \mathcal{F}_{C}$ then  \textbf{goto} Line}~\ref{line:ep:while:for:start} \label{line:ep:while:for:if1_GOTOL22}
		
		\STATE\qquad \textit{If $E_{next}\neq \phi$ then} $E_{next}\gets\,$Enlarge$(E_{next}, e,\mathcal{F}_{CC})$\label{line:ep:while:for:if2_englargeE}
		
		
		\STATE\quad\quad \textbf{if} $\mathcal{F}[i]\,\leq u.\mathcal{B}$ \textbf{and} $v.\mathcal{P}\neq u$ \textbf{then}\label{line:ep:while:for:if3_insertFlage}
		
		\STATE\quad\qquad \textit{If $v.\mathcal{B}\geq\,$$\mathcal{F}[i]$ then }$\mathbb{E}\gets\mathbb{E}\cup\big\{(u,v)\big\}$\label{line:ep:while:for:if3:if}
		
		\STATE\quad\quad \textbf{else if} $u.\mathcal{B}<v.\mathcal{B}$ \textbf{and} $u.\mathcal{B}<v.\mathcal{P}.\mathcal{B}$ \textbf{then}\label{line:ep:while:for:if3:elseif}
		
		\STATE\quad\qquad $\mathcal{F}[i]\,\gets u.\mathcal{B}$, $\,\mathbb{E}\gets\mathbb{E}\cup\big\{(u,v)\big\}$\label{line:ep:while:for:if3:elseif_process}
		
		\STATE\quad\quad\textbf{end if}\label{line:ep:while:for:if3:endif}

		\STATE\qquad \textbf{if} $|\mathbb{E}|+|\mathbb{T}|=(K+1)n$ \textbf{then}\label{line:ep:while:for:if4}
		
		\STATE\qquad\quad$\mathbb{T}$$\gets$EP-Reduce$(\mathbb{T},\mathbb{E})$, $\mathbb{E}$$\gets$$[]$, $i$$\gets$$i$$+$$1$, $\mathcal{F}[i]$$\gets$$+$$\infty$\label{line:ep:while:for:if4:eprestructure}
		
		
		\STATE\quad\quad\textbf{end if}\label{line:ep:while:for:if4:endif}
		\STATE\quad \textbf{end for}\label{line:ep:while:for:endFor} 
		
		\STATE\quad $\mathbb{T}\gets\,$EP-Reduce$(\mathbb{T},\mathbb{E})$\label{line:ep:while:epRestructure}, $\,$\textit{If $i\,=\,0$ then} \textbf{break}

		\STATE\quad$\alpha\gets \mathcal{F}_{R}$, $\,\mathcal{F}_{R}\gets\,$Find($\alpha,\,\mathcal{F}[j]-1$)\label{line:ep:while:alpha} \hfill \COMMENT{// Supposing $\mathcal{F}[j]\leq$ $\mathcal{F}[k]$, $\forall k\in[0,i]$}
		\STATE \quad$\mathcal{F}_{C}\gets\,$Find($\alpha,\mathcal{F}_{R}$$+$$1$), $\,$$\mathcal{F}_{CC}\gets\,$Find($\alpha,\mathcal{F}_{C}$$+$$1$)\label{line:ep:while:find}
		
		\STATE \quad$\forall i\in (\alpha,\mathcal{F}_{R}]$,  $u\gets\,$BON$[i]$ and $\mathbb{T},E_\mathbb{T}\gets\,$vPrune$(u)$\label{line:ep:while:prune}
		
		\STATE \quad $E_R,E_{next},f,t\gets$ErPrune($\mathcal{F}_{C},f,t,E_{next}$)\label{line:ep:while:PruneCheck}\label{line:ep:while:erPrune}
		\STATE \textbf{end while}\label{line:ep:while:end}\label{line:ep:while:endWhile}
		\STATE \textbf{return} Reset($\mathbb{T}, E_\mathbb{T}, E_i,E_o$)\label{line:ep:Reset}\label{line:ep:return}
		
	\end{algorithmic}
\end{algorithm}

\setcounter{algorithm}{0}
\begin{algorithm}[t]
	\floatname{algorithm}{Procedure}
	\caption{EP-Reduce($\mathbb{T},\mathbb{E}$)}
	\label{algo:ep_reduce}
	\begin{algorithmic}[1]
		\renewcommand{\algorithmicrequire}{\textbf{Input:}}
		\renewcommand{\algorithmicensure}{\textbf{Output:}}
		\renewcommand{\algorithmiccomment}[1]{  #1}
		
		
		\STATE $\mathcal{I}_f\gets \mathcal{F}_{R}+1$,  $\mathcal{I}_b\gets \mathcal{F}_{C}+1$, $\mathbb{G}\gets\big(V(\mathbb{T}),E(\mathbb{T})\cup \mathbb{E}\big)$\label{line:imp:indexes}
		\STATE \textbf{while} $\mathcal{I}_f\neq \mathcal{I}_b$ \textbf{do}\label{line:imp:while:start}
		
		\STATE \quad $s\gets\, $BON$[\mathcal{I}_f]$, $\,s.\mathcal{B}\gets\mathcal{I}_f$, $\,\mathcal{I}_f\gets\mathcal{I}_f+1$ \label{line:imp:while:s}
		
		\COMMENT{$\,\,\,\,\,$\quad//AddQ(v,s): BON$[\mathcal{I}_b]$$\gets$$ v$, $\mathcal{I}_b$$\gets$$\mathcal{I}_b$$+$$1$, $v.\mathcal{P}$$\gets$$ s$, \textit{Mark} $v$}
		
		\STATE\quad \textit{If $(s,v)\in E(\mathbb{G})$ \textbf{and} $v$ is unmarked then} AddQ$(v,s)$\label{line:imp:while:if}
		
		\STATE \textbf{end while}\label{line:imp:while:end}
		
		\STATE \textbf{for} the unmarked out-neighbors $u$ of $r$ in $\mathbb{G}$  \textbf{do}\label{line:imp:for:start}
		\STATE \quad \textit{If $u.\mathcal{B}\geq\mathcal{I}_f$ then} AddQ($u,r$) and SearchQ() \label{line:imp:for:if}
		
		\STATE \textbf{end for} \label{line:imp:for:end}
		\STATE  Reset\textit{ES}()\hfill\COMMENT{// $\mathbb{T}$ is an empty tree, and $\mathbb{E}=[]$}\label{line:imp:resetMemory}
		
		\STATE $\forall i\in [\mathcal{F}_{R}\negmedspace+\negmedspace1,\mathcal{I}_f)$, \textit{add} $(u.\mathcal{P},u)$ \textit{into} $\mathbb{T}$, \textit{if} $u$$\,=\,$BON$[i]$\label{line:imp:resetTree}
		\STATE \textbf{return} $\mathbb{T}$\label{line:imp:return}
		
	\end{algorithmic}
\end{algorithm}

Different from EE-BFS and EB-BFS, given an input graph $G$, EP-BFS first scans $G$ and divides $E(G)$ into $E_1,E_2,\dots,E_k$ as shown in Line~\ref{line:pg:ScanG}, Algorithm~\ref{algo:ep_bfs}, where $k=\lceil\frac{m}{n+Kn}\rceil$ if $(1+K)n$ edges can be maintained in the main memory. Besides, $E_1$ contains the first to the $\big((1+K)n\big)$th edges of $G$, and for any $i\in[2,k]$, $E_i$ contains the $\big((i-1)\times (1+K)n\big)$th to the $\big(i\times (1+K)n\big)$th edges of $G$. The edges in $E_1,E_2,\dots,E_k$ are maintained in the form of an adjacency list.

Then, EP-BFS utilizes the edges in $E_1,E_2,\dots,E_k$ to obtain $V_u$ for each node $u$ in $V(G)$, and to initialize $\mathbb{T}$ and the attributes of each node in $\mathbb{T}$, as demonstrated in Lines~\ref{line:pg:for:start}-\ref{line:pg:for:end}. Here, for a node $u$ of $G$, assuming $v$ is an out-neighbor of $u$ in $E_1,E_2,\dots, E_k$. Function Read adds $(u,v)$ into $E_o$ when the out-degree of $v$ in $G$ is zero; otherwise it adds $v$ into $V_u$, as shown in Line~\ref{line:pg:for:read}. When the in-degree of $u$ in $G$ is zero~(denoted by $id(u)=0$ in Line~\ref{line:pg:for:if1}), then EP-BFS enlarges $E_i$ with edges $(u,v)$~($v\in V_u$); otherwise EP-BFS enlarges $E_R$ and $\mathbb{E}$ with these edges, as shown in Line~\ref{line:pg:for:otherwise}. When $\mathbb{E}$ cannot be enlarged anymore, that is, the sum of the sizes of $\mathbb{E}$ and $\mathbb{T}$ reaches $(1+K)n$, EP-BFS executes function Reduce to restructure $\mathbb{T}$ into the BFS forest of the graph composed of $\mathbb{T}$ and $\mathbb{E}$, as illustrated in Line~\ref{line:pg:for:reduce}. 

After the computation of $V_u$, EP-BFS executes function TreeReduce~(as shown in Line~\ref{line:pg:Treduce}) to compute the BFS-tree of the graph composed of $\mathbb{T}$ and $\mathbb{E}$ by letting the dummy node $r$ be the root of $\mathbb{T}$, in which EP-BFS also initializes all the attributes  for the nodes on $\mathbb{T}$. Since in Line~\ref{line:pg:Treduce} $\mathbb{T}$ is reduced to a spanning tree of $G$, the initialized attributes $u.\mathcal{B}$ and $u.\mathcal{P}$ for a node $u$ in $\mathbb{T}$ represent $bfo(u,\mathbb{T})$ and the parent node of $u$ on $\mathbb{T}$, respectively.

The iteration of EP-BFS starts from Line~\ref{line:ep:while:start}, where EP-BFS iteratively scans $E_R$ rather than $E(G)$. For each scanned edge $e=(u,v)$, $e$ cannot be used to enlarge $\mathbb{E}$ when (\romannumeral1) $u.\mathcal{B}$ is no more than $\mathcal{F}_{R}$ or $v.\mathcal{B}$ is no more than $\mathcal{F}_{C}$ then $e$ is skipped directly\footnote{Initially, variable $\mathcal{F}_{R}=-\infty$, $\mathcal{F}_{C}=-\infty$ and $\mathcal{F}_{CC}=-\infty$, and the values of $\mathcal{F}_{R}$, $\mathcal{F}_{C}$ and $\mathcal{F}_{CC}$ are updated at the end of each iteration. In EP-BFS, we introduce notation $\mathcal{F}_{R}$ to record the minimum value among $\mathcal{F}[0],\dots,\mathcal{F}[i]$ in the previous iteration. Assuming $\mathcal{F}_R=bfo(u,T)$ and $\mathcal{F}_{C}=bfo(v,T)$, notation $\mathcal{F}_C$ and notation $\mathcal{F}_{CC}$, normally, refer to the breadth-first order of the rightmost child of $u$ on $T$ and that of $v$ on $T$, respectively.}, as shown in Line~\ref{line:ep:while:for:if1_GOTOL22}; (\romannumeral2) $v.\mathcal{P} = u$, as shown in Line~\ref{line:ep:while:for:if3:if}; (\romannumeral3) when $u.\mathcal{B}$ is equal to or larger than $\mathcal{F}[i]$ but $v.\mathcal{B}$ is smaller than $\mathcal{F}[i]$, as shown in Lines~\ref{line:ep:while:for:if3_insertFlage}-\ref{line:ep:while:for:if3:if}; (\romannumeral4) $u.\mathcal{B}<\mathcal{F}[i]$ and $e$ is not a V-BFS edge as classified by the tree composed of $\mathbb{T}$ and $E_\mathbb{T}$, as shown in Lines~\ref{line:ep:while:for:if3:elseif}-\ref{line:ep:while:for:if3:endif}.

In other words, when $\mathcal{F}[i]$ is no more than both $u.\mathcal{B}$ and $v.\mathcal{B}$, edge $(u,v)$ is added to $\mathbb{E}$ by Line~\ref{line:ep:while:for:if3:if}. When $u.\mathcal{B}$ is smaller than $\mathcal{F}[i]$, EP-BFS needs to check whether $(u,v)$ is a V-BFS edge as classified by the spanning tree of $G$ composed of $\mathbb{T}$ and $E_\mathbb{T}$ in Line~\ref{line:ep:while:for:if3:elseif}.  When $(u,v)$ is a V-BFS edge, EP-BFS not only adds $(u,v)$ to $\mathbb{E}$, but also updates the value of $\mathcal{F}[i]$ to the value of $u.\mathcal{B}$, as demonstrated in Line~\ref{line:ep:while:for:if3:elseif_process}. When $\mathbb{E}$ cannot be enlarged any further, i.e. the number of edges contained in $\mathbb{E}$ and $\mathbb{T}$ reaches $(1+K)n$, the IMP of EP-BFS, named EP-Reduce, is executed to reduce $\mathbb{T}$, after which $\mathbb{E}$ is set to $[]$, $i$ is set to $i+1$, and $\mathcal{F}[i]$ is set to $+\infty$, as illustrated in Lines~\ref{line:ep:while:for:if4}-\ref{line:ep:while:for:if4:endif}.

\textit{EP-Reduce~(the IMP of EP-BFS, as shown in Procedure~\ref{algo:ep_reduce}).} For efficiency,  EP-Reduce relies on a global array of length $n$, named BON, to maintain a total order of the unpruned nodes, i.e. a breadth-first order of the tree  composed of $\mathbb{T}$ and $E_\mathbb{T}$.

Based on BON, EP-Reduce can gradually delete all the edges contained in the in-memory sketch, and reset \textit{ES}~(which is introduced in the overview part of this section) for high efficiency. Specifically, in EP-Reduce, two variables $\mathcal{I}_f$ and $\mathcal{I}_b$ are utilized in order to implement a BFS queue $Q$ based on the array BON, which are initialized to $\mathcal{F}_{R}+1$ and $\mathcal{F}_{C}+1$, respectively, as demonstrated in Line~\ref{line:imp:indexes}, Procedure~\ref{algo:ep_reduce}. That is, at the beginning of EP-Reduce, $Q$ contains $\mathcal{I}_b-\mathcal{I}_f$ elements, where the first element of $Q$ is the $\mathcal{I}_f$th element in the array BON, and the last element of $Q$ is the $(\mathcal{I}_b-1)$th element in the array BON. 

Then, EP-Reduce searches until $Q$ is empty, i.e. $\mathcal{I}_f=\mathcal{I}_b$, as shown in Line~\ref{line:imp:while:start}. In the searching process, EP-Reduce removes the first node $s$ from $Q$, i.e. $s$ is the $\mathcal{I}_f$th element of array BON, and sets $s.\mathcal{B}$ to $\mathcal{I}_f$, as demonstrated in Line~\ref{line:imp:while:s}. After that, $\mathcal{I}_f$ is increased, and EP-Reduce visits the out-neighborhood of $s$. For each out-neighbor $v$ of $s$, if $v$ is unmarked, it executes the function AddQ which (\romannumeral1) adds $v$ into $Q$ by letting BON$[\mathcal{I}_b]$ and $\mathcal{I}_b$ be $v$ and $\mathcal{I}_b+1$, respectively, and (\romannumeral2) sets $v.\mathcal{P}$ to $s$ and marks $v$ as visited, as shown in Line~\ref{line:imp:while:if}. EP-Reduce also has to search all the unmarked out-neighbors $u$ of the dummy node $r$ with the function SearchQ, as illustrated in Lines~\ref{line:imp:for:start}-\ref{line:imp:for:end}. Note that, when $u.\mathcal{B}$ is less than $\mathcal{I}_f$, $u$ is marked, as demonstrated in Line~\ref{line:imp:for:if}. The operation of function SearchQ is the same as that shown in Lines~\ref{line:imp:while:start}-\ref{line:imp:while:end}.

Importantly, during the above search process, after visiting the out-neighborhood of a node $s$, all the outgoing edges of $s$ are no longer maintained in the main memory until EP-Reduce resets \textit{ES} in Line~\ref{line:imp:resetMemory}. That operation is crucial for the efficiency of EP-BFS since EP-Reduce can store the edges of the restructured $\mathbb{T}$ in memory in a specific order. To be precise, at the end of EP-Reduce, it scans the nodes in array BON from the $(\mathcal{F}_{R}+1)$th to the last, and for each scanned node $u$, it lets $u$ be the rightmost child of $u.\mathcal{P}$ in $\mathbb{T}$, as illustrated in Line~\ref{line:imp:resetTree}.

After each pass that EP-BFS scans $E_R$, EP-BFS executes its IMP if $\mathbb{E}$ is not empty, and checks whether this iteration is the last iteration, as shown in Line~\ref{line:ep:while:epRestructure}, Algorithm~\ref{algo:ep_bfs}. If it is not the last iteration, it executes Lines~\ref{line:ep:while:alpha}-\ref{line:ep:while:erPrune}.

Line~\ref{line:ep:while:alpha} uses $\alpha$ to denote $\mathcal{F}_{R}$ that is computed in the previous iteration, and uses $\alpha$ to obtain the new value of $\mathcal{F}_{R}$ by function Find. Function Find has two parameters, where one is $\alpha$ and the other is $\beta$.  It searches from $\beta$ to $\alpha+1$ and returns $\gamma$ if (\romannumeral1) there is a concrete number $\mathcal{C}$ in the range of $(\alpha,\beta]$ such that the $\mathcal{C}$th element of array BON is not a leaf node of $\mathbb{T}$, and (\romannumeral2) assuming $x$ is the rightmost child of BON$[\mathcal{C}]$, $\gamma=x.\mathcal{B}$. Otherwise, it returns $\beta$. To obtain $\mathcal{F}_{R}$, EP-BFS sets $\beta$ to be $\mathcal{F}[j]-1$, where $\mathcal{F}[j]$ refers to the smallest value among $\mathcal{F}[0]$, $\mathcal{F}[1]$, $\dots$, $\mathcal{F}[i]$. Line~\ref{line:ep:while:find} sets $\beta$ to $\mathcal{F}_{R}+1$ in order to find $\mathcal{F}_{C}$ by function Find and sets $\beta$ to $\mathcal{F}_{C}+1$ in order to find $\mathcal{F}_{CC}$. Line~\ref{line:ep:while:prune}~(function vPrune) removes all the edges $e_p=(u,v)$ related to $u$ from $\mathbb{T}$ and adds $e_p$ into $E_\mathbb{T}$, where $u=\,$BON$[i]$ assuming $i$ is a number in the range of $(\alpha,\mathcal{F}_{R}]$.

Line~\ref{line:ep:while:for:if2_englargeE} and Line~\ref{line:ep:while:erPrune} are operations for reducing $E_R$ to $E_{next}$. $E_{next}$ is initialized to $\phi$, as shown in Line~\ref{line:ep:intializedOthers}. If in the $k$th iteration of EP-BFS, function ErPrune sets $E_{next}$ to $[]$, then in the $(k+1)$th iteration EP-BFS adds edges $e=(u,v)$ into $E_{next}$ by function Enlarge, if $u.\mathcal{B}$ is larger than $\mathcal{F}_{C}$ and $v.\mathcal{B}$ is larger than $\mathcal{F}_{CC}$. Function ErPrune sets $E_{next}$ to $[]$ iff $\mathcal{F}_{C}-f>n\times \gamma\times (\frac{m}{n})^t$, in which (\romannumeral1) $f$ and $t$ are variables that are initialized in Line~\ref{line:ep:intializedOthers} and updated in function ErPrune; (\romannumeral2) $\gamma$ is a parameter of EP-BFS which by default is set to $8\%$; (\romannumeral3) $f$ is used to record $\mathcal{F}_{C}$ when $E_{next}$ is set to $[]$; (\romannumeral4) $t$ is used to count the times that $E_{next}$ has been set to $[]$. That is, when $E_{next}$ is set to $[]$, ErPrune also has to set $f$ and $t$ to $\mathcal{F}_{C}$ and $t+1$, respectively.


Line~\ref{line:ep:return} returns $T$ with $\mathbb{T}$, $\mathbb{E_\mathbb{T}}$, $E_i$ and $E_o$, where $V(T)=V(G)$. When $G$ does not have a node $u$ whose in-degree or out-degree is zero, $E(T)=E(\mathbb{T})\cup E_\mathbb{T}$. Otherwise, if the in-degree of $u$ is zero, $u$ is added to $T$ as the rightmost child of $r$, by default. If the out-degree of $u$ is zero, $u$ is added to $T$ as the leftmost child of $r$, by default. When a specific requirement for the position of $u$ on $T$ exists, EP-BFS keeps $u$ in $V(\mathbb{T})$ and all the edges related to $u$ in $E_R$.

\subsection{Analysis} \label{sec:epBFs:analysis}
In this part, we first discuss the correctness of EP-BFS, and then discuss its time and I/O consumption. Figure~\ref{fig:ESM} is a demonstration of the relationship among the edge lists introduced below.

Assuming $E_1,E_2,\dots, E_k$ are $k$ sublists of $E$, where $E_1$ contains the first $|E_1|$ elements of $E$, $E_2$ covers the $(|E_1|+1)$th elements of $G$ to the $(|E_1|+|E_2|)$th elements of $G$. Notation $RE(\mathcal{E}_1,\mathcal{E}_2,\dots,\mathcal{E}_\delta)$ represents a function which restructures the spanning tree $T$ of $G$ with edge sets $\mathcal{E}_1,\mathcal{E}_2,\dots,\mathcal{E}_\delta$, where  $T_1$ denotes the input form of $T$.  $T_{j}$~($j\in[2,\delta+1]$) represents the BFS-tree of the graph composed of $T_{j-1}$ and $\mathcal{E}_{j-1}$ under Stipulation~\ref{sti:stipulation}. $T_{\delta+1}$ denotes the output form of $T$.

\begin{lemma}
	For any spanning tree $T$ of $G$, calling $RE(E_1,E_2,\dots, E_k)$ constantly can obtain a BFS-tree of $G$, and the function $RE(E_1,E_2,\dots, E_k)$ is executed at most \textit{LLSP}$(G)$ times in the worst case. 
\end{lemma}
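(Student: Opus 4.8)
The plan is to recognize that one call to $RE(E_1,E_2,\dots,E_k)$, where $E_1,\dots,E_k$ partition $E(G)$ into batches, is exactly one iteration of the outer loop of EB-BFS (Algorithm~\ref{algo:eb_bfs}): each intermediate tree $T_j$ is the BFS-tree of the graph composed of $T_{j-1}$ and $E_{j-1}$ computed under Stipulation~\ref{sti:stipulation}, which is precisely what Line~\ref{line:al:IM_BFS} produces for the $j$th batch. Consequently, ``calling $RE(E_1,\dots,E_k)$ constantly'' is exactly EB-BFS launched from the arbitrary spanning tree $T$, and I would reduce both assertions of the lemma to Theorem~\ref{theorem:correctness_eeBfs} (correctness) and Theorem~\ref{theorem:eb_searchTime} (the \textit{LLSP}$(G)$ bound). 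Since the depth argument below never uses the particular initialization of $T$, the generality ``for any spanning tree $T$'' is inherited for free.

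For the first assertion I would argue that the iteration reaches a fixed point exactly at a BFS-tree. By Definition~\ref{def:bfs-tree}, $T$ is a BFS-tree of $G$ iff $G$ contains no V-BFS edge as classified by $T$. A single call $RE(E_1,\dots,E_k)$ leaves $T$ unchanged iff none of the batch restructurings reorder any node, which by Stipulation~\ref{sti:stipulation} happens iff no scanned edge is a V-BFS edge of $G$ under the current $T$. Hence the output of $RE$ equals its input iff $T$ is already a BFS-tree, so the repeated process can terminate only at a BFS-tree; Theorem~\ref{theorem:correctness_eeBfs} then certifies that the tree actually returned is a BFS-tree of $G$.

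For the second assertion I would run the depth induction underlying Theorem~\ref{theorem:eb_searchTime}. Using the global order $depth'(\cdot,T)$ defined in the proof of Theorem~\ref{theorem:correctness_eeBfs}, the key invariant is: after the $i$th call to $RE$, every node $u$ with $depth'(u,T)\le i$ has its position on $T$ fixed, equivalently every edge $e=(u,v)$ with $depth'(u,T)\le i$ and $depth'(v,T)\le i$ is not a V-BFS edge. The base case $i=1$ and the inductive step follow as in Theorem~\ref{theorem:eb_searchTime}, because every call scans all of $G$ and therefore propagates fixed positions one depth-layer further. Since any root-to-node path in a spanning tree of $G$ is a simple path, the maximum value of $depth'(\cdot,T)$ in the final BFS-tree is bounded by \textit{LLSP}$(G)$; once $i=\,$\textit{LLSP}$(G)$ every node is fixed, so $T$ is a BFS-tree and no later call alters it, giving the claimed bound.

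The main obstacle I expect is verifying that batch processing under Stipulation~\ref{sti:stipulation} preserves exactly the same depth-fixing invariant as the edge-by-edge restructuring of EE-BFS, so that the induction transports cleanly from the intermediate trees $T_2,\dots,T_{k+1}$ of $RE$ to the global order $depth'$. The delicate case is (\romannumeral2) of Theorem~\ref{theorem:eb_searchTime}, where the leftmost child of $r$ fails to reach all nodes: here one must decompose $T$ into the cut-trees $T_\alpha$ rooted at the children of $r$, track the per-subtree maximal depths $d(T_\alpha)$, and argue that a node appearing ``too early'' in some $T_\alpha$ is either already placed correctly or migrates to an earlier sibling subtree $T_\beta$ with $\beta<\alpha$. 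This is precisely the accounting that makes $depth'$ the correct monotone progress measure, and confirming it closes under the batched Stipulation~\ref{sti:stipulation} is the one step I would treat most carefully.
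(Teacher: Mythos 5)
Your proposal is correct and takes essentially the same route as the paper: the paper's own proof is a one-sentence reduction observing that a single call of $RE(E_1,\dots,E_k)$ scans all of $G$ and restructures $T$ with every edge, hence coincides with one outer-loop iteration of EB-BFS, so the claim follows from Theorem~\ref{theorem:eb_searchTime}. Your additional unpacking of the fixed-point characterization and the depth induction (including the note that the argument never uses the specific initialization of $T$, which the paper leaves implicit) is a faithful elaboration of that same reduction rather than a different argument.
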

\begin{proof}
	The correctness of this statement is obvious according to Theorem~\ref{theorem:eb_searchTime}, since one invocation of the function $RE(T)$ scans the entire $G$ and uses all the edges of $G$ to restructure $T$. 
\end{proof}

Supposing $E_i$ has $\alpha$ different V-BFS edges as classified by $T_i$ which are in the order of $e_1,e_2,\dots,e_{\alpha}$, where if $\forall \beta\in[1,\alpha]$, $e_\beta=(u_\beta,v_\beta)$, then  $bfo(u_1,T_i)\leq bfo(u_2,T_i)\leq \dots\leq bfo(u_\alpha,T_i)$. $S_i$ denotes a sublist of $E_i$, where the edges in $S_i$ are in the same order as they are in $E_i$. $S_i$ contains an edge $e_S$ of $E_i$ iff (\romannumeral1) $e_S$ is $e_1$ or is behind $e_1$ in $E_i$, (\romannumeral2) if $e_S=(u_S,v_S)$, then $u_S$ is not the parent node of $v_S$ on $T_i$, and (\romannumeral3) both $bfo(u_S,T_i)$ and $bfo(v_S,T_i)$ are not smaller than $bfo(u_1,T_i)$. Below, we use the notation \textit{Min}$(S_i)$ to denote $bfo(u_1,T_i)$.

\begin{lemma}
	For any spanning tree $T$ of $G$, calling $RE(S_1,S_2,\dots, S_k)$ constantly can obtain a BFS-tree of $G$, and function $RE(S_1,S_2,\dots, S_k)$ is executed at most \textit{LLSP}$(G)$ times in the worst case. 
\end{lemma}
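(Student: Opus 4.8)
The plan is to reduce this statement to the preceding lemma by proving that, for every batch index $i$, restructuring $T_i$ with the sublist $S_i$ under Stipulation~\ref{sti:stipulation} yields the same tree as restructuring $T_i$ with the full batch $E_i$. Once this per-batch equivalence is secured, the sequence of trees $T_1,T_2,\dots,T_{k+1}$ produced inside one call of $RE(S_1,\dots,S_k)$ coincides with the one produced by $RE(E_1,\dots,E_k)$; hence both the convergence to a BFS-tree (Definition~\ref{def:bfs-tree}) and the bound of \textit{LLSP}$(G)$ calls transfer from the preceding lemma.

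First I would establish the equivalence by showing that every edge of $E_i\setminus S_i$ is \emph{inert}, i.e. its presence cannot alter the breadth-first order computed from $T_i$ under Stipulation~\ref{sti:stipulation}. I would split the discarded edges according to the defining condition they fail. An edge failing condition~(\romannumeral2) has $u_S$ as the parent of $v_S$ on $T_i$, so it already lies in $E(T_i)$ and adds nothing to the graph being searched. An edge $(u_S,v_S)$ failing condition~(\romannumeral3) has an endpoint whose order is below \textit{Min}$(S_i)=bfo(u_1,T_i)$: if $bfo(u_S,T_i)<$\,\textit{Min}$(S_i)$ then, since $u_1$ carries the smallest tail order among all V-BFS edges of $E_i$, this edge cannot itself be a V-BFS edge, so the current parent of $v_S$ is reached no later than $u_S$ and $u_S$ never becomes its parent; if instead $bfo(v_S,T_i)<bfo(u_S,T_i)$, the edge is a back edge that the search never follows to reassign a parent. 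In both sub-cases the edge is inert.

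The remaining and hardest case is condition~(\romannumeral1): the edges lying before the first violation $e_1$. Here the key sub-claim I would prove is that the whole prefix of nodes with breadth-first order below \textit{Min}$(S_i)$ keeps its positions throughout the restructuring, precisely because the smallest tail order of any V-BFS edge is \textit{Min}$(S_i)$; consequently the search behaves on this prefix exactly as it does on $T_i$, so no edge situated before $e_1$ can ever become an \emph{active} V-BFS edge during the pass, and it too is inert. I expect this step to be the main obstacle, because it must reconcile the disk order that defines ``before $e_1$'' with the breadth-first order that governs when an edge can fire. The careful point is to argue that discarding these edges neither perturbs the restructured tree nor slows the advance of the settled frontier, so that the depth induction of Theorem~\ref{theorem:eb_searchTime}---under which one additional level of the tree is fixed per pass---still applies and caps the number of calls of $RE(S_1,\dots,S_k)$ at \textit{LLSP}$(G)$.
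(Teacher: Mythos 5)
Your proposal takes essentially the same route as the paper: the paper's entire proof is the single assertion that the edges in $E_i\setminus S_i$ do not affect the form of $T_{i+1}$ under the Stipulation governing the in-memory restructuring, which is exactly the per-batch equivalence you set out to establish before transferring the convergence and the \textit{LLSP}$(G)$ bound from the preceding lemma. Your case analysis of why each discarded edge is inert supplies detail that the paper omits, but the decomposition and the key claim are identical.
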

\begin{proof}
	The statement is definitely valid, since the edges in $E_i$ that are not contained in $S_i$ do not affect the form of $T_{i+1}$ under Stipulation~\ref{sti:stipulation}.
\end{proof}

\begin{figure}[t]
	\centering
	\includegraphics[scale = 0.725]{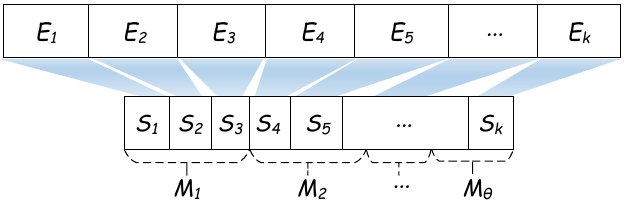}
	\caption{A schematic view of the relationship among the edge lists $E_1,E_2,\dots,$ $E_k$, $S_1,S_2,\dots,S_k$ and $M_1,M_2,\dots,M_\theta$.}
	\label{fig:ESM}
\end{figure} 

Letting $M_1,M_2,\dots,M_\theta$ be the edge lists related to $S_1,$ $S_2,\dots,S_k$, where $\theta\leq k$. $M_1$ is the concatenation of the edge lists from $S_1$ to $S_{c(1)}$ in which $1\leq c(1)\leq k$. $M_2$ is the concatenation of the edge lists from $S_{c(1)}$ to $S_{c(2)}$ in which $c(1)< c(2)\leq k$... $M_\theta$ is the concatenation of the edge lists from $S_{c(\theta-1)}$ to $S_{c(\theta)}$ in which $c(\theta-1)< c(\theta) =\theta$. 

\begin{lemma}
	For any spanning tree $T$ of $G$, calling $RE(M_1,M_2,...,M_\theta)$ constantly can obtain a BFS-tree of $G$, and function $RE(M_1,M_2,...,M_\theta)$ is executed at most \textit{LLSP}$(G)$ times in the worst case.
\end{lemma}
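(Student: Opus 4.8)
The plan is to reduce the statement to the preceding lemma for $RE(S_1,S_2,\dots,S_k)$, exploiting that the $M_j$ are merely a coarser grouping of the very same edges. By construction each $M_j$ is an in-order concatenation of a consecutive block of the lists $S_1,\dots,S_k$, so as edge sets $\bigcup_{j=1}^{\theta} M_j=\bigcup_{i=1}^{k}S_i$. Hence one full pass of $RE(M_1,\dots,M_\theta)$ scans exactly the edges scanned by one full pass of $RE(S_1,\dots,S_k)$; the only difference is that the coarser schedule performs $\theta\le k$ restructurings per pass rather than $k$, merging several consecutive finer restructurings into a single call on a concatenated batch. I would therefore prove the two claims — eventual convergence to a BFS-tree, and the \textit{LLSP}$(G)$ bound on the number of passes — by showing that this merging alters neither the reachable fixed point nor the per-pass progress.

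For correctness, I would reuse the argument of the preceding lemma together with Stipulation~\ref{sti:stipulation}: the edges of each $E_i$ excluded from $S_i$ (hence from every $M_j$) are not V-BFS edges of the relevant tree and so do not change the restructured tree, so any fixed point of the process is free of V-BFS edges and is a BFS-tree by Definition~\ref{def:bfs-tree}. The extra point to check is that replacing the consecutive restructurings on a block of $S_i$'s by a single restructuring on their concatenation $M_j$ still leaves, at the end of the pass, a tree from which the process continues to the same BFS-tree. I would establish this under Stipulation~\ref{sti:stipulation} by verifying that running the in-memory BFS once on the graph composed of $T_j$ and $M_j$ is consistent with — and dominates — the effect of the separate finer restructurings, since supplying more edges to a single restructuring only adds ordering constraints and never reverses a position that a finer restructuring would have fixed.

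For the \textit{LLSP}$(G)$ bound I would adapt the depth-based induction from the proof of Theorem~\ref{theorem:eb_searchTime}. Using $depth(u,T)$ as there, I would prove the invariant that after the $p$th full pass of $RE(M_1,\dots,M_\theta)$ every node $u$ with $depth(u,T)\le p$ has a fixed breadth-first order. The base case $p=1$ is identical to that of Theorem~\ref{theorem:eb_searchTime}. For the inductive step, the crucial observation is that one pass still scans every relevant edge — the union $\bigcup_j M_j$ is unchanged — and ends with a restructuring, so the fixing propagates exactly one more depth level per pass, just as in the finer schedule. Since no simple path of $G$ exceeds \textit{LLSP}$(G)$ edges, the depth of any node is bounded by \textit{LLSP}$(G)$, and the invariant forces termination within \textit{LLSP}$(G)$ passes.

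The main obstacle I anticipate is precisely the merging step of the second paragraph: the lists forming $M_j$ were originally defined relative to the distinct trees arising under the \emph{finer} schedule, yet in $RE(M_1,\dots,M_\theta)$ they are applied, as a single fixed concatenation, to the tree produced by the \emph{coarser} schedule. Showing that a single restructuring over the merged batch makes at least as much depth-fixing progress as the corresponding block of finer restructurings — the monotonicity/domination claim — is the delicate point. I would handle it by an induction on $j$ that tracks, for each node, the first pass at which its breadth-first order becomes fixed, and argues that this index can only decrease, never increase, when restructurings are merged.
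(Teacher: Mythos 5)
Your proposal takes essentially the same route as the paper: both reduce $RE(M_1,\dots,M_\theta)$ to the preceding lemma on $RE(S_1,\dots,S_k)$ by arguing that restructuring once with the concatenated batch $M_j$ has the same effect as the block of finer restructurings it replaces. The only difference is that the paper asserts the stronger claim of exact tree equivalence ($T'_{c(i)+1}$ equals $T_{i+1}$) in a single sentence, whereas you weaken this to a monotonicity/domination claim and correctly flag it as the delicate point --- which is, if anything, a more honest account of where the work lies.
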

\begin{proof}
	Assuming $M_i$ is the concatenation of $S_{c(i-1)+1},$ $S_{c(i-1)+2},\dots,S_{c(i)}$. $RE(M_1,M_2,...,M_\theta)$ restructures $T_i$ to $T_{i+1}$ with $M_i$. $RE(S_1,S_2,\dots, S_k)$ restructures $T_{c(i-1)+1}$ to $T'_{c(i)+1}$ with $S_{c(i-1)+1},$ $S_{c(i-1)+2},\dots,S_{c(i)}$. Then, the statement is valid, since $T'_{c(i)+1}$ is equivalent to $T_{i+1}$. 
\end{proof}

\setcounter{theorem}{3}
\begin{theorem}
	\label{theorem:ep:correctness_EPbfs}
	EP-BFS returns $T$ correctly with at most \textit{LLSP}$(G)$ iterations.
\end{theorem}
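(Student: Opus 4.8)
The plan is to reduce the analysis of EP-BFS to the framework established by the three preceding lemmas, so that both the correctness and the \textit{LLSP}$(G)$ bound follow from the last of them, which asserts that calling $RE(M_1,M_2,\dots,M_\theta)$ repeatedly yields a BFS-tree of $G$ in at most \textit{LLSP}$(G)$ invocations. Concretely, I would exhibit a decomposition $M_1,M_2,\dots,M_\theta$ of the edges of $G$ (obtained through the sublists $S_i$ of each batch $E_i$) and argue that one full pass of EP-BFS over $E_R$ performs exactly the restructuring $RE(M_j)$ on the active portion of the tree $T=\mathbb{T}\cup E_\mathbb{T}$, under Stipulation~\ref{sti:stipulation}. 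Two facts then need to be verified: (a) the edges that EP-BFS actually inserts into $\mathbb{E}$ before each call to EP-Reduce coincide with the relevant ($S_i$-type) edges of the corresponding batch, and (b) the thresholding together with the two pruning routines, vPrune and ErPrune, only ever discard structure whose position in \emph{every} reachable BFS-tree is already determined.

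For (a) and (b) I would maintain a single invariant tying the thresholds $\mathcal{F}_R,\mathcal{F}_C$ (and the per-pass values $\mathcal{F}[0],\dots,\mathcal{F}[i]$) to the \emph{fixed-position} notion used in the proof of Theorem~\ref{theorem:eb_searchTime}: at the start of each iteration, every node $u$ with $bfo(u,T)\le\mathcal{F}_R$ occupies its final position in any BFS-tree reachable from $T$, and symmetrically $\mathcal{F}_C$ bounds the fixed out-neighbour frontier. Granting this invariant, the edges skipped in Line~\ref{line:ep:while:for:if1_GOTOL22} (those with $u.\mathcal{B}\le\mathcal{F}_R$ or $v.\mathcal{B}\le\mathcal{F}_C$) touch only fixed nodes and therefore cannot be V-BFS edges, which is precisely the requirement that both $bfo(u_S,T_i)$ and $bfo(v_S,T_i)$ be no smaller than \textit{Min}$(S_i)$ in the definition of $S_i$. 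Likewise, the update $\mathcal{F}[i]\gets u.\mathcal{B}$ in Line~\ref{line:ep:while:for:if3:elseif_process}, triggered exactly when a V-BFS edge is met, records \textit{Min}$(S_i)$ for the current sub-batch, so the edges admitted into $\mathbb{E}$ by Lines~\ref{line:ep:while:for:if3_insertFlage}-\ref{line:ep:while:for:if3:endif} are exactly the members of $S_i$.

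Next I would verify that EP-Reduce (Procedure~\ref{algo:ep_reduce}) computes the BFS-tree of the graph $\mathbb{G}=(V(\mathbb{T}),E(\mathbb{T})\cup\mathbb{E})$ demanded by Stipulation~\ref{sti:stipulation}: the array BON together with the indices $\mathcal{I}_f,\mathcal{I}_b$ simulates the FIFO queue exactly as the in-memory BFS does, visiting the out-neighbourhood of each dequeued node and appending unmarked children, so the assignment $s.\mathcal{B}\gets\mathcal{I}_f$ reproduces the breadth-first order and Line~\ref{line:imp:resetTree} rebuilds $\mathbb{T}$ accordingly. Because only the prefix of BON below $\mathcal{F}_R$ is excluded from rescanning, combining the restructured $\mathbb{T}$ with the frozen disk-resident part $E_\mathbb{T}$ reconstitutes precisely the tree $T_{j+1}$ that $RE(M_j)$ would produce. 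It then remains to justify the two prunings: vPrune (Line~\ref{line:ep:while:prune}) only detaches edges incident to the nodes BON$[i]$ with $i\in(\alpha,\mathcal{F}_R]$, which the invariant certifies as fixed, and stores them in $E_\mathbb{T}$ rather than discarding them; ErPrune with its companion Enlarge (Lines~\ref{line:ep:while:for:if2_englargeE} and~\ref{line:ep:while:erPrune}) only drop from $E_R$ edges whose endpoints both fall below the advancing thresholds, hence edges that belong to no future $S_i$. Since the tree is merely split as $\mathbb{T}\cup E_\mathbb{T}$ rather than truncated, no relevant edge is lost and the chain of lemmas applies verbatim.

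The main obstacle is proving the threshold invariant rigorously, i.e. that once $\mathcal{F}_R$ is advanced by Find in Line~\ref{line:ep:while:alpha} the nodes of smaller breadth-first order are permanently fixed, and that $\mathcal{F}_C,\mathcal{F}_{CC}$ correctly delimit the still-active frontier in the adjacency-list-reduced lists $E_R$ and $E_{next}$. This forces me to strengthen the induction of Theorem~\ref{theorem:eb_searchTime} so that it tracks not only node depth but also the monotone advancement of $\mathcal{F}_R$ across iterations, and to confirm that the adaptive batch boundaries---each call to EP-Reduce fired when $|\mathbb{E}|+|\mathbb{T}|=(K+1)n$---induce a legitimate concatenation of the $S_i$ into $M_1,\dots,M_\theta$ as required by the last lemma. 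Once the invariant is in place, the restructuring performed in one pass of EP-BFS is shown equivalent, on the active portion, to a single $RE(M_j)$, and both claims of the theorem---correct return of a BFS-tree and termination within \textit{LLSP}$(G)$ iterations---follow immediately from that lemma together with Theorem~\ref{theorem:correctness_eeBfs}.
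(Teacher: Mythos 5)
Your proposal follows essentially the same route as the paper's proof: both reduce the iterative behaviour of EP-BFS to the preceding chain of lemmas on $RE(E_1,\dots,E_k)$, $RE(S_1,\dots,S_k)$ and $RE(M_1,\dots,M_\theta)$, both argue that $\mathcal{F}[i]$, $\mathcal{F}_R$, $\mathcal{F}_C$ and $\mathcal{F}_{CC}$ track \textit{Min}$(S_i)$ and the still-active frontier so that one pass over $E_R$ realises one invocation of $RE(M_j)$, and both establish the correctness of EP-Reduce via the observation that BON with $\mathcal{I}_f,\mathcal{I}_b$ simulates the BFS queue. Your treatment is, if anything, more explicit than the paper's about the threshold invariant and about why vPrune and ErPrune are safe.

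The one piece of the paper's proof you omit is its second case: the nodes removed from $G$ during the \emph{initial} scan (Lines~\ref{line:pg:for:start}--\ref{line:pg:for:end} of Algorithm~\ref{algo:ep_bfs}) because their in-degree or out-degree is zero. These nodes never enter $E_R$ and are only reattached by Reset in Line~\ref{line:ep:return}, so their placement (zero in-degree as rightmost child of $r$, zero out-degree as leftmost child of $r$ by default, or retained in $V(\mathbb{T})$ when an exact position is required) needs a separate, if short, justification; your argument about vPrune/ErPrune does not cover this preprocessing reduction. Adding that case would bring your plan into full alignment with the paper's proof.
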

\begin{proof}
	The statement is proved with two parts. The former part proves that when no node is reduced after scanning $G$, EP-BFS returns $T$ correctly with at most \textit{LLSP}$(G)$ iterations. The latter part proves that EP-BFS is also correct if certain nodes are pruned after scanning $G$.
	
	(1) Assuming in Line~\ref{line:ep:intializedOthers}, Algorithm~\ref{algo:ep_bfs}, $V(\mathbb{T})=V(G)$. This statement is correct, iff EP-BFS restructures $T$ with $M_1,M_2,...,M_\theta$ and EP-Reduce restructures $T$ correctly, where in EP-BFS $T$ is decomposed to $\mathbb{T}$ and $E_\mathbb{T}$. The correctness of EP-Reduce is obvious, since the array BON correctly maintains the breadth-first order of the nodes on the restructured $T$. EP-BFS does restructure $T$ with $M_1,M_2,...,M_\theta$, which is detailedly discussed below. Firstly, at the end of each iteration, $\mathcal{F}_{R}$ is updated to the minimum number of \textit{Min}$(S_i)$ where $i$ is in the range of $[1,k]$. Secondly, $\mathcal{F}_{C}$ is updated to $bfo(v,T)$ if $v$ has the largest breadth-first order on $T$ when requiring that the breadth-first order of the parent of $v$ on $T$ is no larger than $bfo(u,T)$ on $T$, in which we assume \textit{Min}$(S_i)$ refers to $bfo(u,T)$. Besides, $\mathcal{F}_{CC}$ is also updated to $bfo(v,T)$ under the assumption that $\mathcal{F}_{C}$ refers to $bfo(u,T)$.
	
	(2) Assuming in Line~\ref{line:ep:intializedOthers}, Algorithm~\ref{algo:ep_bfs}, $V(\mathbb{T})\neq V(G)$. EP-BFS also returns the correct $T$ because a node $u$ is pruned from $G$ after scanning $G$ once, iff, $u$ has zero in-degree~(denoted by $id(u)=0$) or zero out-degree~(denoted by $od(u)=0$). When $id(u)=0$, EP-BFS adds $u$ into $T$ as the rightmost child of $r$, by default. When $od(u)=0$, EP-BFS adds $u$ into $T$ as the leftmost child of $r$, by default. 
\end{proof}

Based on Theorem~\ref{theorem:ep:correctness_EPbfs}, the I/O and time costs of EP-BFS in the worst case are $O\big(m\times\,$\textit{LLSP}$(G)\big)$ and $O\big(\frac{m}{B}\times $\textit{LLSP}$(G)\big)$, respectively. This is because (\romannumeral1) the time cost of EP-Reduce is $O(n)$, and (\romannumeral2) EP-BFS calls EP-Reduce at most $\lceil\frac{m}{n}\rceil$ times in one iteration in the worst case.

\begin{table*}[t]
	\caption{The experimental results on the real datasets, where  ``-'' indicates that a semi-external BFS algorithm is timed out on a dataset, and the time limit is $24$ hours. The running time~(RT) is reported in seconds. The amount of DT, the size of $G$ and the size of $\mathcal{A}$ are reported in gigabytes.}
	\label{tab:real_datasets}
	\small
	\begin{center}
		\setlength{\tabcolsep}{1.9mm}{
			\begin{tabular}{c|ccc|cc|cc|cc}
				\hline \multirow{2}*{Dataset} & \multirow{2}*{$n/10^6$} &\multirow{2}*{$m/10^6$}&\multirow{2}*{$m/n$} &
				
				\multicolumn{2}{c|}{EB-BFS} &\multicolumn{2}{c|}{EP-BFS}& 
				\multicolumn{2}{c}{Size~(GB)}
				\\
				
				& & & &		RT~(s) & I/O~(GB)& RT~(s) & I/O~(GB)  & $G$& $\mathcal{A}$~($K=1$)\\
				\hline
				hollywood-2011 	&$2.18$&$229$&$105$	&$260$&$10.2$&$60$&$6.5$&$1.71$&$0.089$\\
				eu-2015-host  &$11.3$&$387$	&$34.4$	&$837$&$17.3$&$104$&$10.4$&$2.88$&$0.35$\\
				uk-2002	 	&$18.5$&$298$	&$16.1$
				&$4,097$&$75.5$&$138$&$12.7$&$2.22$&$0.57$\\	
				gsh-2015-tpd&$30.8$&$602$	&$19.5$ 
				&$1,711$&$26.9$&$214$&$14.2$&$4.49$&$0.93$\\
				it-2004		&$41.3$&$1,151$&$27.9$ 
				&$6,541$&$111$&$647$&$54.0$&$8.57$&$1.25$\\
				twitter-2010	&$41.7$&$1,468$ & $35.3$
				&$11,165$&$175$&$682$&$40.8$&$10.9$&$1.26$\\
				sk-2005		&$50.6$&$1,949$&$38.5$ 
				&$13,414$&$218$&$839$&$75.3$
				&	$14.5$&$1.53$\\	
				Friendster&$68.3$&$2,586$&$37.8$&$11,650$&$154$&$1,217$&$59.0$&$19.3$&$2.06$\\
				uk-2007-05	&$105$&$3,738$&$35.3$ 
				&-&-&$1,707$&$141$&	$27.9$&$3.17$\\
				uk-2014&$787$&$47,614$&$60.4$
				&-&-&$40,155$&$2,303$&$355$&$23.6$\\
				clueweb12& $978$&$42,574$&$43.5$ 
				&-&-&$31,384$&$1,856$&$317$&$29.4$\\	
				gsh-2015&$988$&$33,877$&$34.3$&-&-&$25,535$&$1,588$&$252$&$29.7$\\
				eu-2015  &$1,071$&$91,792$	&$85.7$
				
				&-&-&$73,526$&$4,124$		&$683$&$32.2$\\
				WDC-2014&$1,727$&$64,422$&$37.3$ 
				&-&-		&$32,496$&$1,558$&	$480$ &$51.7$\\
				
				\hline				
				
			\end{tabular}
		}
	\end{center}
\end{table*}

\section{Performance Evaluation}\label{sec:experiments}
In this section, we evaluate the performance of the proposed efficient semi-external BFS algorithm EP-BFS against the two naive algorithms EE-BFS and EB-BFS for the semi-external BFS problem and a single-machine out-of-core graph processing system GridGraph~\cite{DBLP:conf/usenix/ZhuHC15}, on both synthetic and real graphs. Comparing EP-BFS with EE-BFS, EB-BFS and GridGraph is sufficient, because as we mention in Section~\ref{sec:related_work}, there is no semi-external BFS algorithm that has been proposed before this paper because of the difficulty of the problem. Besides, it is reported~\cite{DBLP:conf/asplos/ZhangWZQHC18,DBLP:conf/usenix/AiZWQCZ17,DBLP:conf/usenix/Vora19} that (\romannumeral1) current single-machine out-of-core graph processing systems are efficient and support various graph problems including BFS; (\romannumeral2) GridGraph exhibits relatively stable performance under varying memory conditions~\cite{DBLP:conf/asplos/ZhangWZQHC18,DBLP:conf/usenix/AiZWQCZ17,DBLP:conf/usenix/Vora19}.


All the experiments are executed on a DELL Optiplex Tower Plus 7010 (Intel(R) Core(TM) i9-13900 CPU @ 2.00GHz + 64GB memory + 64bit windows 11). Besides, in our experiments, all the semi-external algorithms~(EE-BFS, EB-BFS and EP-BFS) are implemented in Java with jdk-8u112-windows-x64. GridGraph is implemented in C++ accessed from ``\textit{https://github.com/colbacc8/GraphPreprocessing}''. We limit each experiment to $24$ hours. \textit{As EE-BFS cannot be terminated within our time limit in almost every experiment, the experimental results of EE-BFS are not demonstrated and discussed in the following of this section.} The storage devices used in our experiment include a $5$TB HDD  and a  $2$TB SSD\footnote{For reproducibility, the HDD and SSD used in this section is obtained from ``\textit{https://item.jd.com/100021394398.html}'' and ``\textit{https://item.jd.com/100} \textit{007541860.html}'', respectively}. All the experiments are conducted on the HDD by default.

The efficiency and the number of I/Os are the main factors that we are interested in for each semi-external BFS algorithm on each graph. The former is measured by the running time, whereas the latter is measured by the total size of disk accesses, i.e. the amount of data transferred~(DT). In addition, the effect of the size of internal memory, i.e. the size of $K$, on the performance of each algorithm is also evaluated in the experiments. For each experiment, we also report the size of the in-memory sketch of EP-BFS.

To evaluate all the proposed algorithms, various large-scale datasets including both synthetic and real datasets are utilized. Table~\ref{tab:real_datasets} summarizes all the real datasets, in which dataset Friendster can be accessed from the website ``\textit{http://www.konect.cc/}'', dataset WDC-2014 can be accessed from the website ``\textit{http://www.webdatacommons.org/}'', and all the other datasets can be accessed from the website ``\textit{http://law.di.unimi.it/}''. The datasets are detailedly introduced on the websites where they can be accessed. Synthetic datasets are randomly generated~\cite{DBLP:journals/csur/DrobyshevskiyT20}, according to Erdös-Rényi~(ER) model. To be precise, for a synthetic dataset $G=(V,E)$, we randomly and repeatedly generate an edge $e=(u,v)$ where $u,v\in V$ and $u\neq v$, and the edges in $E$ are unique. \textit{$|V|$ of a generated synthetic dataset $G$ is in the range of $[5$ million, $2$ billion$]$, while $|E|$ is in the range of $[15$ million, $40$ billion$]$.}

One experiment of this section evaluates one semi-external algorithm on a disk-resident general directed graph $G$, where the edges of $G$ are stored in a random order. Its in-memory sketch $\mathcal{A}$ is implemented by the two arrays \textit{ES} and \textit{Adj}, as introduced in Section~\ref{sec:epBFs}. The available memory space maintains $(K+1)n$ edges for $\mathcal{A}$, where $K$ is in the range of $[0.05,2]$ in the experiments, and by default $K=1$. This implementation method can protect all the implemented algorithms from being affected by the garbage collection mechanisms of implementation languages, since we only modify the values of the arrays that are initialized at the beginning. The experiments conducted with Java and Eclipse IDE are  controled by six parameters: \textit{-Xms}, \textit{-Xmx}, \textit{-Xmn}, \textit{-XX:PretenureSizeThreshold}, \textit{-XX:-UseGCOverheadLimit} and \textit{-XX:+PrintGCDetails}\footnote{For more details, please see ``\textit{https://www.oracle.com/}''.}.

\subsection{The impact of graph structure on real graphs}
Table~\ref{tab:real_datasets} presents the experimental results of evaluating the performance of EB-BFS and that of EP-BFS on $14$ real graphs with different structures, as briefly introduced below. hollywood-2011 has the largest average node degree.   eu-2015-host is the host graph of graph eu-2015.  gsh-2015-tpd contains the top private domains of Graph gsh-2015. twitter-2010 and Friendster are two large social networks that can be openly accessed. uk-2007-05 is a relatively large graph containing millions of nodes and billions of edges. clueweb12, gsh-2015 and graph eu-2015 all include about $1$ billion nodes where graph eu-2015 has over $91$ billion edges. WDC-2014 is a massive hyperlink graph that contains over $1.7$ billion nodes.  These graphs are generated from different fields in different ways, where (\romannumeral1) some graphs are social graphs, e.g., hollywood-2011 contains the relationship between actors and movies; (\romannumeral1) some graphs are crawls from the internet in different domains, e.g. uk-2007-05 is collected from the .uk domain by DELIS project~\cite{BSVLTAG}.

Experimental results show that on the graph with millions of nodes or tens of millions of nodes, EB-BFS can work but is inefficient compared with EP-BFS, as shown in Table~\ref{tab:real_datasets}. Specifically, on the graph hollywood, EB-BFS consumes $260$ seconds and $10.2$GB I/Os, while EP-BFS only needs $60$ seconds and $6.5$GB I/Os. On the graph uk-2002, the time cost of EP-BFS~($138$ seconds) is $3$ percent of that of EB-BFS~($4,097$ seconds), and the I/O cost of EP-BFS~($12.7$GB) is less than $17$ percent of that of  EB-BFS~($75.5$GB). On the graph it-2004, EP-BFS is an order of magnitude faster than EB-BFS, and the I/O consumption of EP-BFS is only half of that of EB-BFS. On the large-scale graph sk-2005, EP-BFS can be accomplished with $1,217$ seconds and $59$GB, while EB-BFS requires $11,650$ seconds and $154$GB. On the social graph twitter-2010 and Friendster, the performance of EP-BFS is also significantly better than that of EB-BFS. 

The reason why EP-BFS is much more efficient than EB-BFS on all the above graphs $G$ is that EP-BFS, for one thing, can prune a great number of nodes from $G$ after scanning $G$ once since real graphs contain many nodes whose  in-degree or out-degree is $0$. For example, the graph uk-2002 has over $10^3$ nodes whose in-degree is zero, and over $10^6$ nodes whose out-degree is zero, as illustrated on the website ``\textit{https://law.di.unimi.it/webdata/uk-2002/}''. For another, EP-BFS is a cache-friendly algorithm, which can fully utilize cache for high efficiency. Besides, by introducing $\mathcal{F}[i]$, EP-BFS can prune a lot of the newly scanned edges during the in-memory sketch enlarging process. For instance, the I/O cost of EP-BFS is almost half of that of EB-BFS on the graph sk-2005, but the time cost of EP-BFS is only less than $7$ percent of that of EB-BFS on the graph sk-2005.

On the massive graphs with hundreds of millions of nodes or billions of nodes, EB-BFS cannot be finished within our time limit, whereas EP-BFS is still efficient, as demonstrated in Table~\ref{tab:real_datasets}. To be precise, EP-BFS can obtain the BFS results within half an hour, but EB-BFS cannot within $24$ hours. The time costs of EP-BFS on graph uk-2014, clueweb12, gsh-2015, eu-2015 and WDC-2014 are about $11.2$ hours, $8.7$ hours, $7.1$ hours, $20.4$ hours and $9.1$ hours, respectively; the I/O costs of EP-BFS on these massive graphs are about $2.25$TB, $1.8$TB, $1.6$TB, $4.0$TB and $1.5$TB, respectively. The above results show that the proposed efficient semi-external BFS algorithm, EP-BFS, can process the massive graphs with billions of nodes or tens of billions of edges within an acceptable time frame. 

Table~\ref{tab:real_datasets} also reports the size of the in-memory sketch $\mathcal{A}$ of EP-BFS when it processes the $14$ real graphs utilized in the experiments of this part. For example, the size of $\mathcal{A}$ on graph uk-2002 is $0.57$GB, while the size of the entire graph uk-2002 is $2.22$GB. The size of $\mathcal{A}$ on graph Friendster is only $10\%$ of the size of the entire Friendster. On the massive graph eu-2015, the size of $\mathcal{A}$ is only $32.2$GB, where the size of graph eu-2015 is $683$GB, and $\frac{32.2}{683}\approx4.7\%$. Applications that need to process such massive graphs with memory constraint could consider implementing EP-BFS, while EB-BFS is enough when applications just need to process small graphs such as hollywood-2011, as the implementation of EB-BFS is simpler.

\begin{figure}[t]
	\centering
	\renewcommand{\thesubfigure}{}
	
	\subfigure[(a) Node statistic]{
		\includegraphics[scale = 0.85]{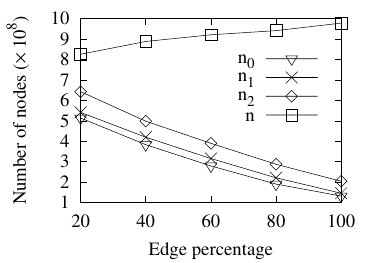}}
	\subfigure[(b) Edge statistic]{
		\includegraphics[scale = 0.85]{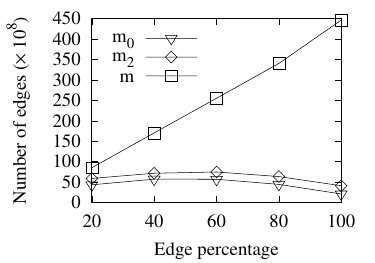}}
	\subfigure[(c) Efficiency]{
		\includegraphics[scale = 0.85]{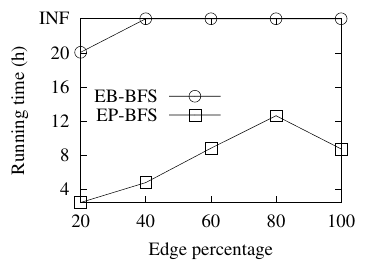}}
	\subfigure[(d) I/O]{
		\includegraphics[scale = 0.85]{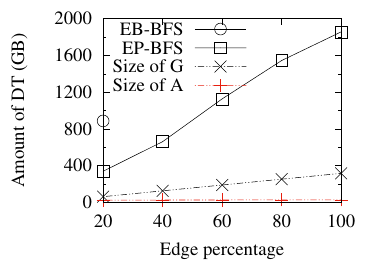}}
	\caption{Experiment of uniformly and randomly selecting edges from massive datasets clueweb12, in which subfigure~(a) and (b) are the node and edge statistics of the generated graphs $G$,  and subfigures~(c) and (d) report the experimental results. $n=|V(G)|$ and $m=|E(G)|$. $S$ is the largest SCC of $G$. $n_0$ is the number of nodes which are not strongly connected to any other nodes. $n_1$ is the number of nodes which are strongly connected to $[0,10]$ other nodes. $n_2$ is the number of nodes that are not in $S$. $m_0$ represents the number of edges, whose ends are not strongly connected. $m_1$ is the number of edges that are not in $S$.}
	\label{fig:uk-2014_clueweb1212_R}
\end{figure}

\subsection{The impact of randomly generating graphs from real massive graphs}\label{sec:experiments:cluewebSubGRAPHS}

In this part, experiments are conducted to evaluate the performance of the proposed algorithms on graphs that are generated by randomly selecting edges from the graph clueweb12. The reason for selecting graph clueweb12 over the others is that its average node degree and scale are relatively high. We vary the size of the selected edge percent from $20$ to $100$, as shown on the x-axes of Figure~\ref{fig:uk-2014_clueweb1212_R}. 

Specifically, to generate a graph $G_p$ from the graph clueweb12 (denoted by $G$) where $\frac{|E(G_p)|}{|E(G)|}=p$, the graph generation process scans the disk-resident $G$ once, and for each scanned edge $e$, the probability that $e$ is added to $G_p$ is $p$. Due to the fact that the size of $|E(G)|$ is huge, $|E(G_p)|=p\times m$~\cite{Statistic}. In Figure~\ref{fig:uk-2014_clueweb1212_R}(a) and Figure~\ref{fig:uk-2014_clueweb1212_R}(b), the number of nodes of $G_p$, the number of edges of $G_p$ and five other properties of $G_p$~(as illustrated in the title of Figure~\ref{fig:uk-2014_clueweb1212_R}) are reported to understand the structures of $G_p$. From the statistics of $G_p$, we can see that the generated graphs have different graph structures. For instance, when $p=20$, $\frac{|V(G_p)|}{|V(G)|}\approx 85\%$, the largest strongly connected component of $G_p$ contains only $25\%$ of the nodes of $G$, whereas the largest strongly connected component of $G$ includes about $80\%$ of the nodes of $G$. Based on the generated graphs, the performance of the proposed algorithms can be fully tested with reproducibility.

Experimental results are given in Figure~\ref{fig:uk-2014_clueweb1212_R}(c) and Figure~\ref{fig:uk-2014_clueweb1212_R}(d). In these experiments, EB-BFS can only obtain the BFS results when $p=20$, and it consumes over $20$ hours and almost $900$GB I/Os. However, the efficient semi-external BFS algorithm EP-BFS only needs less than $2.5$ hours and less than $350$GB I/Os when running on the graph that is generated by randomly selecting $20\%$ of the edges from the graph clueweb12. The I/O costs of EP-BFS increase linearly when the value of $p$ increases. The time consumption of EP-BFS increases with the number of nodes or the number of edges contained in the generated graphs increases, as shown in Figure~\ref{fig:uk-2014_clueweb1212_R}(a)-(c). However, it is noticed that EP-BFS on the generated graph with $p=80$ consumes more running time than it does on the entire clueweb12. In terms of the I/O cost of EP-BFS on these two graphs, randomly selecting edges from clueweb12 increases the total time that EP-BFS invokes its IMP, i.e. EP-Reduce. The experimental results further prove that EP-BFS is efficient for processing large-scale graphs with different structures in semi-external memory model.

\subsection{The comparison of the proposed algorithms with GridGraph}

In this part, we evaluate the performance of the proposed algorithms and that of the single-machine out-of-core graph processing system GridGraph for computing the total BFS order on the subgraphs of twitter-2010. The reason why we select twitter-2010 over the others is that this dataset is widely-used for evaluating the performance of the single-machine out-of-core graph processing systems~\cite{DBLP:conf/asplos/ZhangWZQHC18,DBLP:conf/usenix/AiZWQCZ17,DBLP:conf/usenix/Vora19}. The subgraphs of twitter-2010 utilized in this part are generated in the same way as the subgraphs of clueweb12 are generated in Section~\ref{sec:experiments:cluewebSubGRAPHS}. We vary the size of the selected edge percent from $0.1$ to $100$. However, GridGraph times out on the subgraphs with more than $3\%$ edges, which is noteworthy.  EP-BFS only requires $682$ seconds and $1.26$GB memory, while GridGraph is allowed to use $2$GB memory but is timed out.

The experimental results on the subgraphs with $0.1\%$ to $3\%$ edges are shown in Figure~\ref{fig:rangeTwitter_outCore}, where Figure~\ref{fig:rangeTwitter_outCore}(a) and Figure~\ref{fig:rangeTwitter_outCore}(b) demonstrate the structures of these subgraphs. Figure~\ref{fig:rangeTwitter_outCore}(c) reports the running time for each experiment of this part, and Figure~\ref{fig:rangeTwitter_outCore}(d) reports the memory usage of EP-BFS and that of GridGraph on each subgraph. Although larger subgraphs have a larger proportion of nodes connected, the time consumption of GridGraph is still growing exponentially, as it is related to numerous random disk I/Os. For instance, to compute the total BFS order of the nodes for the graph with $3\%$ edges, EP-BFS consumes $21$ seconds, EB-BFS consumes $87$ seconds, while GridGraph requires more than $22$ hours. The memory consumption of EP-BFS is only about a quarter of that of GridGraph.

\begin{figure}[t]
	\centering
	\renewcommand{\thesubfigure}{}
	
	\subfigure[(a) Node statistic]{
		\includegraphics[scale = 0.85]{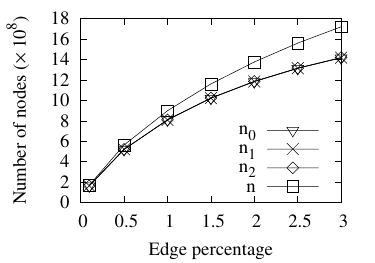}}
	\subfigure[(b) Edge statistic]{
		\includegraphics[scale = 0.85]{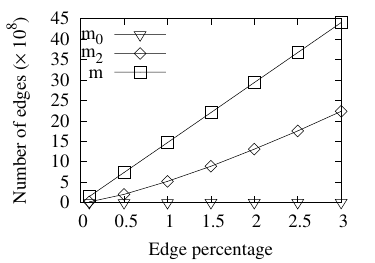}}
	\subfigure[(c) Efficiency]{
		\includegraphics[scale = 0.85]{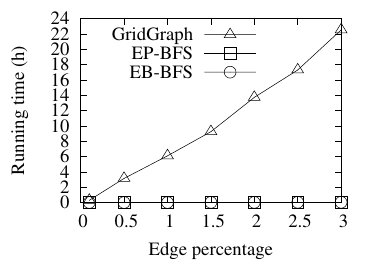}}
	\subfigure[(d) Memory usage]{
		\includegraphics[scale = 0.85]{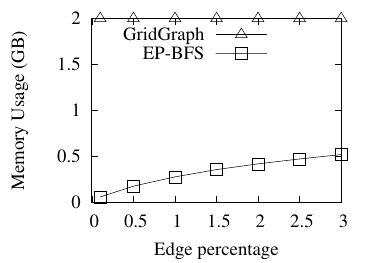}}
	\caption{The experimental results of EB-BFS, EP-BFS and GridGraph on the subgraphs of twitter-2010. The meanings of $n,n_0,n_1,n_2$ and $m,m_0,m_2$ are discussed in Figure~\ref{fig:uk-2014_clueweb1212_R}.}
	\label{fig:rangeTwitter_outCore}
\end{figure}

\begin{figure}[t]
	\centering
	\renewcommand{\thesubfigure}{}
	
	\subfigure[(a) Efficiency~(small step)]{
		\includegraphics[scale = 0.85]{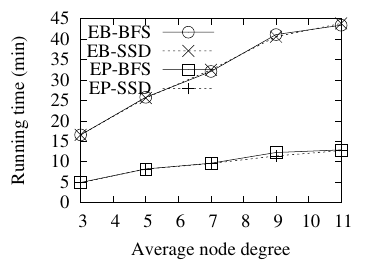}}
	\subfigure[(b) I/O~(small step)]{
		\includegraphics[scale = 0.85]{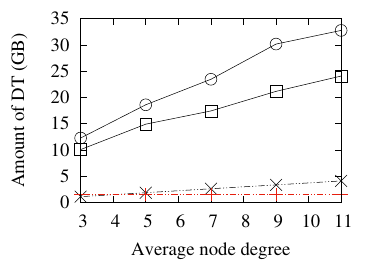}}
	\subfigure[(c) Efficiency~(big step)]{
		\includegraphics[scale = 0.85]{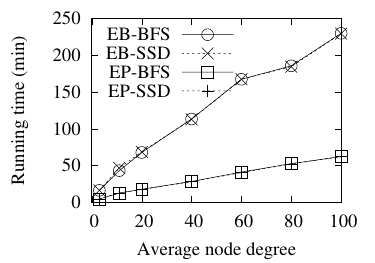}}
	\subfigure[(d)  I/O~(big step)]{
		\includegraphics[scale = 0.85]{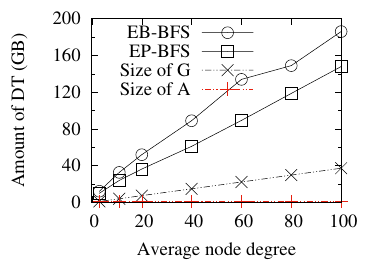}}
	\caption{The experimental results of varying average node degree on synthetic datasets in which $n=50$ million. The legend of subfigure (b) is omitted, which is the same as that of subfigure (d).}
	\label{fig:rangeD_S}
\end{figure}

\begin{figure}[t]
	\centering
	\renewcommand{\thesubfigure}{}
	\subfigure[(a) Efficiency]{
		\includegraphics[scale = 0.85]{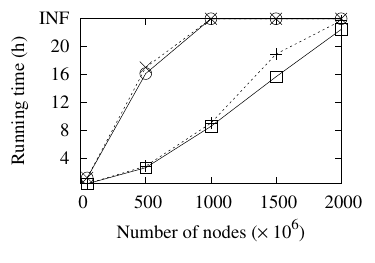}}
	\subfigure[(b) I/O]{
		\includegraphics[scale = 0.85]{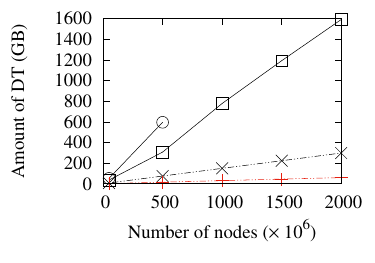}}
	
	\caption{The experimental results of varying $n$ on synthetic datasets by fixing $\frac{m}{n}=20$. The legends of subfigure (a) and (b) are omitted, which are the same as that of Figure~\ref{fig:rangeD_S}(c) and (d), respectively.}
	\label{fig:rangeV_S}
\end{figure}

\begin{figure}[t]
	\centering
	\renewcommand{\thesubfigure}{}
	\subfigure[(a) Efficiency]{
		\includegraphics[scale = 0.85]{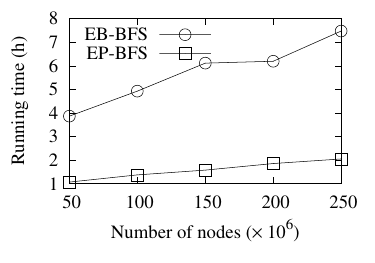}}
	\subfigure[(b) I/O]{
		\includegraphics[scale = 0.85]{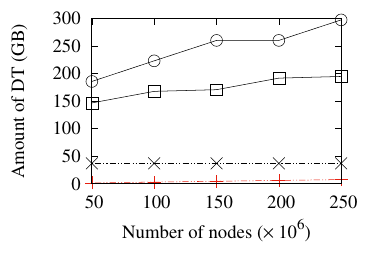}}
	
	\caption{The experimental results of varying average node degree on synthetic datasets in which $m=5$ billion.  The legend of subfigure (b) is omitted, which is  the same as that of Figure~\ref{fig:rangeD_S}(d).}
	\label{fig:rangeM_S}
\end{figure}

\subsection{The impact of varying $\frac{m}{n}$ on synthetic graphs with fixed $n$}\label{sec:experiments:AND}

In this part, we conduct two groups of experiments, as demonstrated in Figure~\ref{fig:rangeD_S}. The average node degree~($\frac{m}{n}$) in the first group of experiments ranges from $3$ to $11$, as illustrated in Figure~\ref{fig:rangeD_S}(a) and Figure~\ref{fig:rangeD_S}(b), whereas that in the second group ranges from $3$ to $100$, as depicted in Figure~\ref{fig:rangeD_S}(c) and Figure~\ref{fig:rangeD_S}(d). The node number of each generated graph is set to $50$ million. The experimental results demonstrate that the I/O costs of both EP-BFS and EB-BFS increase when the average node degree of the generated graph increases. The time consumption of EP-BFS on the generated graph with $\frac{m}{n}=100$ is about $50$ minutes, whereas the time consumption of EB-BFS on the generated graph with $\frac{m}{n}=11$ is already over $40$ minutes. The performance of EP-BFS is better than that of EB-BFS, especially when $\frac{m}{n}$ of the generated graph is high. The reason is obvious: when $\frac{m}{n}$ is high, EP-BFS prunes more edges with parameter $\mathcal{F}[i]$ and executes fewer times of its IMP, compared with EB-BFS. Besides, as the total number of edges that need to be scanned increases when $\frac{m}{n}$ of the generated graph increases, the cache efficiency of EP-BFS could greatly reduce the running time of EP-BFS.

\subsection{The impact of varying $n$ on synthetic graphs with fixed $\frac{m}{n}$}\label{sec:experiments:FN}
We conduct a group of experiments to evaluate the performance of the proposed algorithms on synthetic graphs with a fixed average node degree. In the experiments, we vary $n$ from $5$ million to $2$ billion and let $\frac{m}{n}$ be $20$, as shown in Figure~\ref{fig:rangeV_S}. According to the experimental results drawn in Figure~\ref{fig:rangeV_S}, the I/O cost of each tested algorithm increases, and the increase trend of the I/O consumption of EP-BFS is linear. EB-BFS is timed out when $n=1$ billion, $n=1.5$ billion, and $n=2$ billion. The growth trends of both the CPU cost of EP-BFS and the I/O cost of EP-BFS are slow compared with the CPU and I/O consumption of EB-BFS. Based on Figure~\ref{fig:rangeV_S}(a), the running time of EP-BFS on the generated graph with $n=1.5$ billion is approximately equal to that of EB-BFS with $n=0.5$ billion. These experimental results prove that EP-BFS is an efficient semi-external algorithm that has good scalability when the number of nodes of the input graph increases.


\subsection{The impact of varying storage device}

In this part, the proposed algorithms are tested on both HDD and SSD with three groups of experiments, as shown in Figure~\ref{fig:rangeD_S} and Figure~\ref{fig:rangeV_S}. The datasets utilized in experiments shown in Figure~\ref{fig:rangeD_S} are introduced in Section~\ref{sec:experiments:AND}, while the others are introduced in  Section~\ref{sec:experiments:FN}. The experimental results depicted in Figures~\ref{fig:rangeD_S}-\ref{fig:rangeV_S} demonstrate that the proposed algorithms exhibit similar performance on two different storage devices. The reason is that all the proposed algorithms access the disk-resident dataset sequentially, which is fast. For instance,  on the graphs with $50$ million nodes, the time costs of EP-BFS on HDD and SSD are almost the same. 

These experimental results confirm that even though  currently SSDs are very fast, especially for applications that read data from them, utilizing such devices to support EP-BFS is not necessary. On the one hand, HDDs are sufficiently efficient for EP-BFS, for example the HDD used in our experiment supports USB 3.0, in that the data that EP-BFS reads from the HDD are stored continuously in its disk block, including (\romannumeral1) the input $G$ and (\romannumeral2) the data that EP-BFS writes into it. EP-BFS is designed specifically for HDDs, which never access data from the disk randomly. On the other hand, from our experiments, EP-BFS contains numerous CPU computations, which cause EP-BFS to still not be able to take full advantage of the read rate of current HDDs. More importantly, HDDs are more suitable for analyzing large-scale graphs under semi-external memory model in the industrial field, not only because HDDs are cheaper, but also because the fact that SSDs cannot support a large amount of data being written for a long time.

\subsection{The impact of varying $n$ on synthetic graphs with fixed $m$}
We also conduct experiments on a set of synthetic graphs $G$, where the edge number of $G$ is fixed at $5$ billion, and the node number of $G$ ranges from $50$ million to $250$ million, as shown in Figure~\ref{fig:rangeM_S}. Experimental results show that even though on the graphs with a fixed size, the time and I/O consumption of EP-BFS and that of EB-BFS increase with the increase in the node number of the generated graph. 

For instance, the time cost of EP-BFS on the generated graph with $n=50$ million is about $1$ hour, whereas the time cost of EP-BFS on the generated graph with $n=250$ million is higher than $2$ hours. EB-BFS consumes about $4$ hours and $7.5$ hours on the generated graph with $50$ million nodes and that with $250$ million nodes, respectively. Compared to the growth trend of the time consumption of EP-BFS or EB-BFS with the increase in the node number contained in the generated graphs, the growth trend of its I/O cost is slower. That is because, when $G$ contains more nodes, the time costs of the IMPs of both EP-BFS and EB-BFS increase. Besides, compared with EB-BFS, EP-BFS is more efficient. The growth trends of both the time and the I/O costs of EP-BFS are slower than those of EB-BFS. The reason is that EP-BFS does not maintain the entire spanning tree $T$ of $G$ in the main memory, and EP-BFS could reduce edges when enlarging its in-memory sketch, or at the end of certain iterations.

\begin{figure}[t]
	\centering
	\renewcommand{\thesubfigure}{}
	\subfigure[(a) Efficiency~(uk-2007-05)]{
		\includegraphics[scale = 0.85]{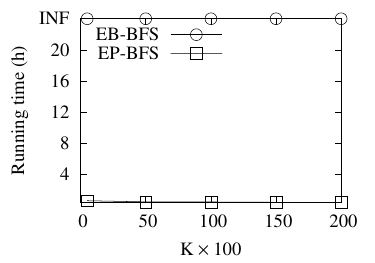}}
	\subfigure[(b) I/O~(uk-2007-05)]{
		\includegraphics[scale = 0.85]{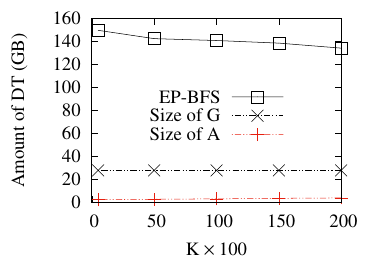}}
	\subfigure[(c) Efficiency~(Friendster)]{
		\includegraphics[scale = 0.85]{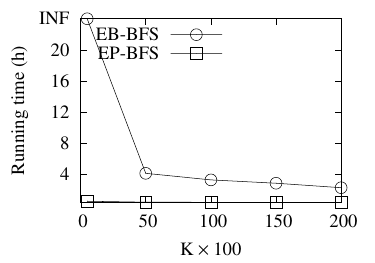}}
	\subfigure[(d) I/O~(Friendster)]{
		\includegraphics[scale = 0.85]{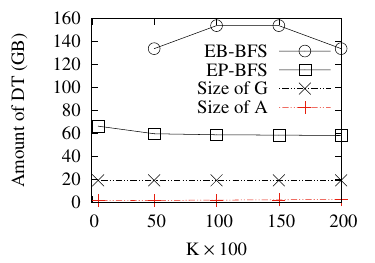}}
	\subfigure[(e) Efficiency~(synthetic graph)]{
		\includegraphics[scale = 0.85]{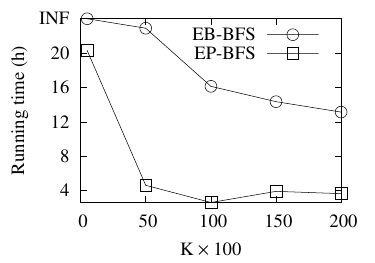}}
	\subfigure[(f) I/O~(synthetic graph)]{
		\includegraphics[scale = 0.85]{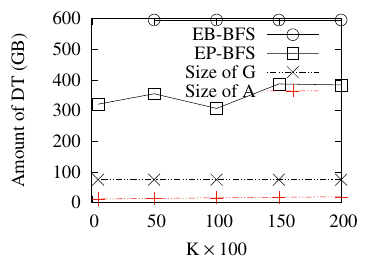}}
	\caption{The experimental results of varying memory size~(i.e. the size of $K$) on both real and synthetic graphs, where the utilized real graphs are  uk-2007-05 and Friendster, and the utilized synthetic graph $G$ contains $500$ million nodes, with the average node degree of $G$ being $20$.}
	\label{fig:rangeK_R}
\end{figure}

\subsection{The impact of varying $K$ on both real and synthetic graphs}

To evaluate the scalability of the proposed algorithms under different memory space sizes, we conduct experiments on two real datasets, namely, the graph uk-2007-05 and the graph Friendster, as well as a synthetic graph with $500$ million nodes and $10$ billion edges. We vary the value of $K$ in the range of $5\%$ to $200\%$, as shown in Figure~\ref{fig:rangeK_R}. The experimental results demonstrate that the scalability of EP-BFS under different memory spaces is significantly better than that of EB-BFS. Furthermore, when the available memory space of the computing device is small~(for example, $K=0.05$), EP-BFS still  performs well. 

Specifically, on the graph uk-2007-05, EB-BFS is timed out even when $3\times n$ edges can reside in the main memory. The performance of EP-BFS remains stable when the value of $K$ is varied on the graph uk-2007-05, which takes about half an hour and less than $150$GB I/Os. On the graph Friendster, EB-BFS times out when $K=5\%$, and requires about $3$ hours and $150$GB I/Os in the other cases, whereas EP-BFS exhibits good and stable performance, whose time cost is about $20$ minutes and I/O cost is about $58$GB. On the synthetic graph $G$, the performance of EP-BFS is also significantly better than that of EB-BFS. For instance, EB-BFS is timed out when $K=0.05$, while EP-BFS can process $G$ and obtain BFS results under our time limit. The CPU consumption of EP-BFS when $K=0.5$ is about $4$ hours, whereas the CPU consumption of EB-BFS when $K=0.5$ nearly reaches our time limit. The I/O cost of EP-BFS is stable which is about $350$GB, while the I/O cost of EB-BFS is about $600$GB when $K$ is ranged from $0.5$ to $2$.

\section{Conclusion} \label{sec:conclusion}
This paper considers the semi-external BFS problem for general disk-resident directed graphs, which is widely used in many applications. As semi-external BFS problem is still an open issue, this paper first proposes two naive algorithms with small MMSRs, named EE-BFS and EB-BFS, based on the basic framework of processing graph problems in semi-external memory model. Then, for efficiently processing current large-scale graphs with memory constraint, this paper proposes EP-BFS, the efficient semi-external BFS algorithm. EP-BFS is efficient because it can reduce the main costs of obtaining BFS results in semi-external memory model by (\romannumeral1) only maintaining part of $T$ in the main memory, (\romannumeral2) introducing a threshold to control the in-memory sketch enlarging process, (\romannumeral3) fully utilizing cache of the computing devices, etc. After presenting the correctness and complexity proofs of EP-BFS, experiments on both real and synthetic graphs are conducted to evaluate the performance of the proposed algorithms. Experimental results show that EE-BFS is inefficient, EB-BFS is relatively efficient and can be utilized in certain cases when $G$ contains millions of nodes or edges, whereas EP-BFS is efficient on large-scale graphs with over $1.7$ billion nodes or $91$ billion edges. 

\section*{Acknowledgments}
This work was supported in part by NSFC 62402135, U21A20513, 62202277, Taishan Scholars Program of Shandong Province grant tsqn202211091, and Natural Science Foundation of Shandong Province grant ZR2023QF059.




%
%
%
\end{document}